\newif\ifqcrypt
\numberwithin{equation}{section}
\DeclarePairedDelimiter\angles{\langle}{\rangle}
\newcommand{\ot}{\otimes}
\renewcommand{\epsilon}{\varepsilon}
\newcommand{\ketbra}[2]{\lvert#1\rangle\langle #2\rvert }
\newcommand{\negl}{\mathsf{negl}}
\newcommand{\N}{\mathbb{N}}
\newcommand{\SGen}{\mathsf{SGen}}
\newcommand{\Sign}{\mathsf{Sign}}
\newcommand{\Ver}{\mathsf{Ver}}
\newcommand{\vk}{\mathsf{vk}}
  \theoremstyle{definition}
\newtheorem{theorem}{Theorem}[section]
\newtheorem{claim}[theorem]{Claim}
\newtheorem{definition}[theorem]{Definition}
\newtheorem{lemma}[theorem]{Lemma}
\newcommand{\adv}{\mathcal{A}}
\newlength\myindent
\newcommand{\ct}{\textsf{ct}}
\newcommand{\pk}{\textsf{pk}}
\newcommand{\sk}{\textsf{sk}}
 \newcommand{\Gen}{\textsf{Gen}}
 \newcommand{\SKGen}{\textsf{SKGen}}
 \newcommand{\PKGen}{\textsf{PKGen}}
\newcommand{\Enc}{\textsf{Enc}}
\newcommand{\Dec}{\textsf{Dec}}
\newcommand{\PRF}{\textsf{PRF}}
\newcommand{\msg}{\textsf{msg}}
\newcommand{\resp}{\textsf{resp}}
\newcommand{\st}{\textsf{st}}
 \newcommand{\QKDFirst}{\textsf{QKDFirst}}
  \newcommand{\QKDSecond}{\textsf{QKDSecond}}
   \newcommand{\QKDDecode}{\textsf{QKDDecode}}
\begin{document}
\title{Robust Quantum Public-Key Encryption \\ with Applications to Quantum Key Distribution}
\date{}
\author[1,2]{Giulio Malavolta}
\author[3]{Michael Walter}
\affil[1]{Bocconi University, Milan, Italy}
\affil[2]{Max Planck Institute for Security and Privacy, Bochum, Germany}
\affil[3]{Ruhr-Universit\"{a}t Bochum, Bochum, Germany}

\maketitle
\ifqcrypt
\else
\begin{abstract}
Quantum key distribution (QKD) allows Alice and Bob to agree on a shared secret key, while communicating over a public (untrusted) quantum channel.
Compared to classical key exchange, it has two main advantages:
(i)~The key is \emph{unconditionally} hidden to the eyes of any attacker, and
(ii)~its security assumes only the existence of authenticated classical channels which, in practice, can be realized using Minicrypt assumptions, such as the existence of digital signatures.
On the flip side, QKD protocols typically require multiple rounds of interactions, whereas classical key exchange can be realized with the minimal amount of two messages using public-key encryption.
A long-standing open question is whether QKD requires more rounds of interaction than classical key exchange.

In this work, we propose a two-message QKD protocol that satisfies \emph{everlasting} security, assuming only the existence of quantum-secure one-way functions. That is, the shared key is unconditionally hidden, provided computational assumptions hold during the protocol execution. Our result follows from a new construction of quantum public-key encryption (QPKE) whose security, much like its classical counterpart, only relies on authenticated \emph{classical} channels.
\end{abstract}
\fi

\ifqcrypt
\else
\section{Introduction}
\fi
Quantum key distribution (QKD)~\cite{BB84} enables Alice and Bob to exchange a secret key over a public (untrusted) quantum channel.
Compared to classical key exchange, it offers two main advantages:
(i)~It hides the key \emph{unconditionally} or \emph{information-theoretically} to the eyes of any (possibly unbounded and quantum) attacker, and (ii)~it relies only on the existence of authenticated classical channels, which in practice can be instantiated using Minicrypt~\cite{Impagliazzo} computational assumptions.
The former is plainly impossible to achieve without quantum information, and we also have strong evidence~\cite{DBLP:conf/stoc/ImpagliazzoR89} that classical key exchange requires more structured (Cryptomania) computational assumptions.

Over the past decades, QKD has inspired a staggering amount of research, ranging from profound theoretical works~\cite{MY98,LC99,Mayers01,Renner} all the way to large-scale experiments~\cite{jouguet2013experimental,korzh2015provably,yin2017satellite}.
As such, QKD is one of the most studied topics in the theory quantum information.
Despite the vast literature on the topic, QKD protocols still do not outperform classical key exchange in all aspects:
Whereas classical key exchange can be realized using two messages~\cite{DH76}, which is optimal, to the best of our knowledge all known QKD protocols require more than two rounds of interaction.

This raises the question of whether more rounds of interaction are really necessary for QKD.
Apart from its theoretical importance, there are also practical reasons for addressing this question.
Indeed, two-message protocols are particularly desirable for practical scenarios where parties may drop offline during the protocol execution, may want to send their message at a later point in time, or do not want to keep a state across rounds.
Arguably, this minimal interaction pattern is the property makes traditional cryptographic primitives, such as public-key encryption, so useful.
However, despite this strong motivation and almost fifty years of intense research, the optimal round complexity of QKD is still an open problem.

\ifqcrypt
\subsection*{Our Results}
\else
\subsection{Our Results}
\fi
In this work we show that two messages are sufficient to build a QKD protocol with everlasting security.
Specifically, we prove the following statement:
\begin{center}
    If quantum-secure one-way functions exist, then there exists two-message QKD.
\end{center}
Since two messages are clearly necessary for QKD, our protocol achieves the optimal round complexity. Furthermore, only the first message (from Alice to Bob) is quantum, whereas Bob's response is entirely classical.
The protocol satisfies the strong notion of \emph{everlasting security}: As long as the attacker runs in quantum polynomial time during the execution of the protocol, the shared key is hidden in an \emph{information-theoretic} sense.

We view our approach as a big departure from the traditional design of QKD protocols~\cite{BB84,Ekert}, and it is inspired instead by recent works on cryptography with certified deletion~\cite{BK22} and quantum public-key encryption~\cite{QPKE}. Both our protocol and its analysis are entirely elementary and they are simple enough to be fully described in an undergraduate class.
For comparison, it took more than ten years for researchers to establish a formal proof of the first QKD protocol~\cite{BB84}.

Our main technical ingredient is a new framework for building quantum public-key encryption (QPKE), a cryptographic primitive that allows Alice to sample a public key consisting of a quantum state~$\rho$ and a classical string~$\pk$. The scheme is \emph{robust}, in the sense that security is guaranteed to hold even if the distinguisher is given~$\pk$, and it is allowed to tamper arbitrarily with the quantum state~$\rho$. We present two instances of QPKE:
\begin{itemize}
    \item (Everlasting Security) In our first scheme, the message $m$ remains \emph{information-theoretically} hidden, provided that the distinguisher was computationally bounded during the execution of the protocol. This property holds if the distinguisher is given a \emph{single copy} of the public key, which is sufficient to build QKD.
    \item (Computational Security) In our second scheme, the message $m$ is computationally hidden, i.e., we only require security against a computationally bounded distinguisher. While this is a weaker security notion, the advantage of the scheme is that security holds even if the distinguisher is given \emph{arbitrarily many} copies of the public key.
\end{itemize}
The fact that the first construction is only secure in a model where we give access to a single copy of the public key is not a coincidence. It is shown~\cite{QPKE} that given enough (but polynomially-many) copies of the public key, an unbounded adversary can launch a key-recovery attack. This means that there does not exist any QPKE with everlasting security against a distinguisher that sees arbitrarily-many copies of the public key, and it justifies the need for a weaker security notion (computational security).

\paragraph{Everlasting Security.}
We point out a subtle difference between the attacker model that we consider in this work, compared to the standard attacker model for QKD.
The latter, considered for instance in~\cite{shor2000simple,tomamichel2017largely}, models the attacker as a computationally unbounded quantum channel, that is however not allowed to tamper with the information sent over the \emph{classical channels}.
That is, it only assumes the existence of authenticated classical channels, but otherwise does not impose any restriction on the runtime of the distinguisher.
On the other hand, in this work we consider -- in addition to the existence of authenticated classical channels -- an attacker that runs in \emph{quantum polynomial time} during the protocol execution, but it is allowed to be unbounded once the protocol terminates.
That is, we prove \emph{everlasting security} in the sense of~\cite{Everlasting}.

While technically different, we argue that for most practical scenarios the two models are in fact equivalent.
The assumption of an authenticated classical channel is most often justified by having each party sign their own messages with a digital signature, which would require computational assumptions to hold (at the very least) during the execution of the protocol.
In this sense, the mere existence of authenticated classical channels already restricts the attacker to run in quantum polynomial time during the protocol execution (as otherwise it could just break the security of the digital signature).

Finally, it is not hard to show that everlasting security is the \emph{best possible} security notion for QPKE, i.e., there exists a generic attack against any QPKE scheme, if the distinguisher is allowed to run in unbounded time during the execution of the protocol. The attack works even in the presence of authenticated classical channels and succeeds with certainty.
\ifqcrypt
\else
For completeness, we report the proof of this fact in~\cref{appendix}.
\fi

\ifqcrypt
\else
\subsection{Concurrent Work}
A concurrent work~\cite{tamper} obtains similar results on robust QPKE, where security holds only against a distinguisher that is allowed to tamper arbitrarily with the quantum portion of the public key. Compared to our work, they only consider the setting of \emph{computational} security, whereas we view the scheme with \emph{everlasting} security as the main contribution of our work, which is the one that enables our two-message QKD protocol. Even focusing on the computational settings, our schemes share many similarities but, interestingly, they are not identical. At a technical level, their approach is based on one-time signatures for Wiesner states, whereas our approach can (in retrospect) be thought of as one-time signing the $\ket{+}$ state. On the other hand,~\cite{tamper} presents a scheme where the public key is a pure state and furthermore their schemes achieve the stronger notion of CCA-security, which we do not consider in this work.

\subsection{Open Problems}
Our work leaves open a series of questions, that we hope will inspire further research in this area.
For starters, our protocols are described in the presence of perfect (noiseless) quantum channels.
Any practical protocol would need to withstand the presence of noise.
While theoretically one could simply encode all states using a quantum error correcting code, this may lead to poor concrete efficiency.
We leave open the question of investigating variants our non-interactive QKD protocol that are efficient in the presence of noise.

A compelling aspect of standard QKD protocols such as~\cite{BB84} is that all quantum states consists of tensor products of single qubits, whereas our protocols require coherent superpositions of many-qubit states.
The former property is desirable, since it allows the experimental realization of the protocol on present-day quantum hardware.
We view the problem of constructing a non-interactive QKD protocol in this qubit-by-qubit model as fascinating research direction, and believe that it might require substantially different techniques from the present approach.
\fi

\ifqcrypt
\subsection*{Overview of the Solution}
\else
\subsection{Overview of the Solution}
\fi
Our main technical contribution is a new recipe to construct QPKE, whose definition we recall next. The syntax of QPKE consists of three algorithms:
key generation, encryption, and decryption.
The key generation algorithm produces a classical key pair~$(\sk, \pk)$, along with a quantum state~$\rho$.
The pair $(\pk,\rho)$ makes up the public key.
Given this public key, anyone can compute a classical ciphertext~$\ct$ encrypting a given message~$m$, which can only be decrypted by the owner of the secret key.
In terms of security, we require that the message~$m$ should be \emph{information-theoretically} hidden, even if an {efficient} attacker is allowed to tamper arbitrarily with the quantum state associated with the public key.

\paragraph{Constructing QPKE.}
Our construction of QPKE relies on a conceptually different approach than traditional QKD protocols.
To gain some intuition about our framework it is useful to consider a quantum state and measurement
\[
\frac{\ket{0, B_0} + \ket{1, B_1}}{\sqrt{2}} \quad\text{ and }\quad \{\Pi, \mathsf{Id} - \Pi\},
\]
where $\ket{B_0}$ and $\ket{B_1}$ are two \emph{unforgeable basis states} sampled by the key generation algorithm, and~$\{\Pi, \mathsf{Id}-\Pi\}$ is an efficiently-implementable projective measurement that verifies that the second register is in the subspace of all such basis states (without knowing the particular ones used).

It turns out that one can use this to transmit information secretly:
The decrypter keeps the the first qubit and measures it in the Hadamard basis, which has the effect of introducing a random relative phase across the two unforgable basis states (this can already done at key generation time).
The encrypter measures the second register in the Hadamard basis to obtain correlated randomness, that they can use to securely mask a message with a (classical) one-time pad.

To gain an intuition on why this scheme is secure, observe that handing the second register to the adversary is information-theoretically indistinguishable from handing them a random one of the two basis state~$\ket{B_0}$ and~$\ket{B_1}$.
Because those states are assumed to be hard to forge, all the adversary can do is to pass them on (unless they want to make the verification projection fail), and now a Hadamard measurement yields an information-theoretically uniformly random outcome.

Our actual scheme realizes the above abstract template using one-time signatures: The hidden basis are simply valid message-signature pairs, whereas the projection $\Pi$ can be efficiently implemented by running the verification algorithm coherently and measuring the resulting output bit. For more details we refer the reader
\ifqcrypt
to the technical manuscript.
\else
to the technical sections.
\fi

\paragraph{From QPKE to QKD.} Once we built QPKE it appears to be an easy exercise to construct a two-message QKD protocol: Alice can sample a public and send it to Bob, who replies with an encryption of a randomly sampled key $k\in \{0,1\}^\lambda$. However, there is a subtle aspect in the analysis of this protocol, which relates to the fact that secrecy does not necessarily imply security. In fact an attacker may be able to cause Alice and Bob to agree on \emph{different keys} without violating the security of QPKE. We once again rely on one-time signature to (provably) prevent this class of attacks.

\ifqcrypt
\else
\subsection{Organization of this Paper}

{In \cref{sec:prelims}, we discuss preliminaries in quantum information and cryptography and we prove some useful technical statements.
In \cref{sec:QPKE}, we define the notion of QPKE, and we present two constructions (with different tradeoffs) whose security can be reduced to the one-wayness of any post-quantum one-way function. In \cref{sec:qkd} we present the formal description and the analysis of our two-message QKD protocol.

\section{Preliminaries}\label{sec:prelims}
Throughout this work, we denote the security parameter by~$\lambda$.
We denote by $1^\lambda$ the all-ones string of length~$\lambda$.
We say that a function $\negl$ is \emph{negligible} in the security parameter~$\lambda$ if~$\negl(\lambda) = \lambda^{\omega(1)}$.
For a finite set~$S$, we write~$x \gets S$ to denote that $x$ is sampled uniformly at random
from~$S$.
We write $\mathsf{Tr}$ for the trace of a matrix or operator.

\subsection{Quantum Information}
In this section, we provide some preliminary background on quantum information.
For a more in-depth introduction, we refer the reader to~\cite{DBLP:books/daglib/0046438}.
A \emph{register}~$\mathsf{x}$ consisting of~$n$ qubits is given by a Hilbert space~$(\mathbb C^2)^{\ot n}$ with name or label~$\mathsf{x}$.
Given two registers~$\mathsf{x}$ and~$\mathsf{y}$, we write~$\mathsf{x} \ot \mathsf{y}$ for the composite register, with Hilbert space the tensor product of the individual registers' Hilbert spaces.
A \emph{pure quantum state} on register~$\mathsf{x}$ is a unit vector~$\ket{\Psi}_\mathsf{x} \in (\mathbb C^2)^{\ot n}$.
A \emph{mixed quantum state} on register~$\mathsf{x}$ is represented by a density operator~$\rho_\mathsf{x}$ on $(\mathbb C^2)^{\ot n}$, which is a positive semi-definite Hermitian matrix with trace~$1$.
Any pure state~$\ket\Psi_{\mathsf x}$ can also be regarded as a mixed state~$\rho_{\mathsf x} = \ket\Psi_{\mathsf x}\bra\Psi_{\mathsf x}$, but there are mixed states that are not pure.
In the above, we use subscripts to denote registers, but we often omit these when clear from context.
We adopt the convention that
\[
\left\{\ket{0}, \ket{1}\right\} \text{ and }\left\{\frac{\ket{0} + \ket{1}}{\sqrt{2}}, \frac{\ket{0} - \ket{1}}{\sqrt{2}}\right\}
\]
denote the \emph{computational} and the \emph{Hadamard basis} states, respectively.

A \emph{quantum channel}~$F$ is a completely-positive trace-preserving (CPTP) map from a register~$\mathsf{x}$ to a register~$\mathsf{y}$.
That is, on input any density matrix $\rho_\mathsf{x}$, the operation $F$ produces $F(\rho_\mathsf{x}) = \tau_\mathsf{y}$, another state on register~$\mathsf{y}$, and the same is true when we apply~$F$ to the $\mathsf x$ register of a quantum state~$\rho_{\mathsf x\mathsf z}$.
For any unitary operator~$U$, meaning $U^\dagger U = UU^\dagger = \mathsf{Id}$, one obtains a quantum channel that maps input states~$\rho$ to output states~$\tau := U \rho U^\dagger$.
The Pauli operators $\mathsf{X},\mathsf{Y},\mathsf{Z}$ are $2\times 2$ matrices that are unitary and Hermitian.
More specifically:
\[\mathsf{X}=\begin{pmatrix} 0 &1 \\  1 &0 \end{pmatrix}, \quad
\mathsf{Y}=\begin{pmatrix} 0 &-i \\  i &0 \end{pmatrix}, \quad
\mathsf{Z}=\begin{pmatrix} 1 &0 \\ 0 &-1 \end{pmatrix}.\]
A \emph{projector}~$\Pi$ is a Hermitian operator such that~$\Pi^2 = \Pi$.
A \emph{projective measurement} is given by a collection of projectors~$\{\Pi_j\}_j$ such that $\sum_j \Pi_j = \mathsf{Id}$.
Given a state~$\rho$, the measurement yields outcome~$j$ with probability~$p_j = \mathsf{Tr}(\Pi_j \rho)$, upon which the state changes to~$\Pi_j \rho \Pi_j / p_j$ (this can be modeled by a quantum channel, but we will not need this).
For any two registers $\mathsf{x}$ and $\mathsf{y}$, the partial trace~$\mathsf{Tr}_\mathsf{y}$ is the unique channel from $\mathsf{x}\otimes\mathsf{y}$ to $\mathsf{x}$ such that~$\mathsf{Tr}_\mathsf{y}(\rho_\mathsf{x} \ot \tau_\mathsf{y}) = \mathsf{Tr}_\mathsf{y}(\tau_\mathsf{y}) \rho_\mathsf{x}$ for all~$\rho_\mathsf{x}$ and~$\tau_\mathsf{y}$.

The \emph{trace distance} between two states $\rho$ and $\tau$, denoted by $\mathsf{Td}(\rho, \tau)$ is defined as
\[
\mathsf{Td}(\rho, \tau) = \frac{1}{2}\|\rho -\tau\|_1 = \frac{1}{2}\mathsf{Tr}\left(\sqrt{(\rho-\tau)^\dagger (\rho-\tau)}\right).
\]
The operational meaning of the trace distance is that $\frac12(1+\mathsf{Td}(\rho,\tau))$ is the maximal probability that two states~$\rho$ and~$\tau$ can be distinguishdd by any (possibly unbounded) quantum channel or algorithm.
{If $\tau=\proj\Phi$ is a pure state, we have the following version of the Fuchs-van de Graaf inequalities:
\begin{align}\label{eq:fuchs}
  1 - \angles{\Phi|\rho|\Phi} \leq \mathsf{Td}(\rho, \tau) \leq \sqrt{1 - \angles{\Phi|\rho|\Phi}}.
\end{align}}

\paragraph{Quantum Algorithms.}
A non-uniform \emph{quantum polynomial-time (QPT) machine} $\{\adv_\lambda\}_{\lambda\in\mathbb{N}}$ is a family of polynomial-size quantum machines $\adv_\lambda$, where each is initialized with a polynomial-size advice state $\ket{\alpha_\lambda}$.
Each $\adv_\lambda$ can be described by a CPTP map.
A quantum interactive machine is simply a sequence of quantum channels, with designated input, output, and work registers.
We say that two probability distributions~$\mathcal{X}$ and $\mathcal{Y}$ are \emph{computationally indistinguishable} if there exists a negligible function $\negl$ such that for all QPT algorithms $\adv_\lambda$ it holds that
\[
\bigl|\Pr\left[1 \gets \adv_\lambda(x) : x \gets \mathcal{X}\right] - \Pr\left[1 \gets \adv_\lambda(y) : y \gets \mathcal{Y}\right]\bigr| = \negl(\lambda).
\]
We say that they are \emph{statistically indistinguishable} if the same holds for all (possibly unbounded) algorithms.

\paragraph{Distinguishing Implies Swapping.}
We recall the formal statement of the equivalence between distinguishing states and swapping on the conjugate basis. This was  proven in~\cite{distswap} and below we show a rephrased version borrowed from~\cite{HMY22}. We actually only state one direction of the implication (the converse is also shown to be true in~\cite{distswap}), since it is the one needed for our purposes.
\begin{theorem}[Distinguishing Implies Swapping~\cite{distswap}]\label{lemma:dis}
    Let $\ket{\Psi}$ and $\ket{\Phi}$ be orthogonal $n$-qubit states, and suppose that a QPT distinguisher $\adv_\lambda$ distinguishes $\ket{\Psi}$ and $\ket{\Phi}$ with advantage $\delta$ without using any ancilla qubits. Then, there exists a polynomial-time computable unitary $U$ over $n$-qubit states such that
    \[
    \frac{\left| \bra{y}U\ket{x} + \bra{x} U \ket{y}\right|}{2} = \delta \text{ where } \ket{x} = \frac{\ket{\Psi} + \ket{\Phi}}{\sqrt{2}} \text{ and } \ket{y} = \frac{\ket{\Psi} - \ket{\Phi}}{\sqrt{2}}.
    \]
    Moreover, if $\adv_\lambda$ does not act on some qubits, then $U$ also does not act on those qubits.
\end{theorem}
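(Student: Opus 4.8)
The plan is to unfold the distinguisher into a unitary followed by a single‑qubit measurement, rewrite the gap between its two acceptance probabilities as an off‑diagonal matrix element of a single projector in the conjugate basis $\{\ket{x},\ket{y}\}$, and then pass from that projector to the reflection around it, which will be the unitary $U$ we are after.

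First, normalize the distinguisher. Since $\adv_\lambda$ uses no ancilla, on an $n$‑qubit input $\rho$ it applies a polynomial‑time $n$‑qubit unitary $V$ and then measures a designated output qubit in the computational basis, outputting the result (any adaptive measure‑then‑continue behaviour is deferred into this form). Let $P$ be the projector onto the output qubit being $\ket{1}$, and set $M := V^\dagger P V$, a projector, so that $\Pr[\adv_\lambda(\rho)=1]=\mathsf{Tr}(M\rho)$. The point is that $M$ is \emph{efficiently reflectable}: because $\mathsf{Id}-2P=\mathsf{Z}$ acting on the output qubit, the operator $U:=\mathsf{Id}-2M=V^\dagger \mathsf{Z}_{\mathrm{out}}V$ is a polynomial‑time computable $n$‑qubit unitary. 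Flipping the output bit if necessary, we may assume $\delta=\bra{\Psi}M\ket{\Psi}-\bra{\Phi}M\ket{\Phi}$.

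Next, change basis. Substituting $\ket{\Psi}=(\ket{x}+\ket{y})/\sqrt2$ and $\ket{\Phi}=(\ket{x}-\ket{y})/\sqrt2$ and expanding, the diagonal terms $\bra{x}M\ket{x}$ and $\bra{y}M\ket{y}$ cancel in the difference while the cross terms add up, giving $\delta=\bra{x}M\ket{y}+\bra{y}M\ket{x}$, which is real since $M$ is Hermitian. Now invoke orthogonality: since $\ket{\Psi},\ket{\Phi}$ are orthonormal, $\langle x|y\rangle=0$, hence
\[
\bra{x}U\ket{y}+\bra{y}U\ket{x}=-2\bigl(\bra{x}M\ket{y}+\bra{y}M\ket{x}\bigr)=-2\delta,
\]
which is exactly $\frac{\left|\bra{y}U\ket{x}+\bra{x}U\ket{y}\right|}{2}=\delta$. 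For the ``moreover'' clause, observe that if $V=V'\otimes\mathsf{Id}$ acts trivially on some set of qubits (necessarily disjoint from the output qubit), then $U=(V')^\dagger \mathsf{Z}_{\mathrm{out}}V'\otimes\mathsf{Id}$ acts trivially on the same set.

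The only delicate point — and the step I expect the cited argument to spend its care on — is pinning down the meaning of an ``ancilla‑free'' distinguisher. If one allows $\adv_\lambda$ to measure many qubits and feed the outcomes through an arbitrary efficient classical function $f$, then the relevant operator becomes $M=V^\dagger\bigl(\sum_{z:\,f(z)=1}\ketbra{z}{z}\bigr)V$ and its reflection is $V^\dagger\bigl(\sum_z(-1)^{f(z)}\ketbra{z}{z}\bigr)V$; implementing this diagonal sign unitary for a general efficient $f$ genuinely seems to need a scratch qubit. One resolves this either by the deferred‑measurement reduction sketched above (collapsing everything to a single genuine output qubit) or by declaring that such classical post‑processing is folded into $V$ together with one extra designated output register — under that reading the argument above is complete.
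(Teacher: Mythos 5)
The paper itself gives no proof of this theorem---it is imported (in rephrased form) from~\cite{distswap} and~\cite{HMY22}---and your argument is precisely the standard one from those works: model the ancilla-free distinguisher as a polynomial-time unitary $V$ followed by a measurement of a designated output qubit, set $U=V^\dagger\mathsf{Z}_{\mathrm{out}}V=\mathsf{Id}-2M$ with $M=V^\dagger P V$, and use $\langle x|y\rangle=0$ to convert the advantage $\delta=\bra{x}M\ket{y}+\bra{y}M\ket{x}$ into the claimed off-diagonal quantity, with the ``moreover'' clause following since $U$ inherits the tensor-product structure of $V$. Your computation is correct, and the caveat you raise about what ``ancilla-free'' means (intermediate measurements and classical post-processing being folded into $V$ with one designated output qubit) is exactly the modeling convention adopted in the cited proof, so the argument is complete under that reading.
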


\subsection{Information Theory}

Recall the definition of the min-entropy of a random variable $X$ as
\[
H_\infty (X) = -\log\left(\mathsf{max}_x \Pr[X=x]\right).
\]
We recall the definition of average conditional min-entropy in the following.
\begin{definition}[Average Conditional Min-Entropy] Let $X$ be a random-variable supported on a finite set $\mathcal{X}$ and let $Z$ be a (possibly correlated) random variable supported on a finite set $\mathcal{Z}$. The average-conditional min-entropy $\Tilde{H}_\infty(X|Z)$ is defined as
\[
\Tilde{H}_\infty(X|Z) = -\log\left(\mathbb{E}_z\left[\mathsf{max}_{x\in\mathcal{X}} \Pr[X=x|Z=z]\right]\right).
\]    
\end{definition}
It is shown in~\cite{DRS04,DORS08} that the average conditional min-entropy satisfies a \emph{chain rule}, that is 
\begin{equation}\label{eq:chain}
    \tilde{H}_\infty(X | Z) \geq H_\infty(X) - H_0(Z)
\end{equation}
where $H_0(Z)$ denotes the logarithm of the size of the support of $Z$. 
Next, we recall the definition of a seeded randomness extractor.
 \begin{definition}[Extractor]\label{def:extractor}
      A function $\mathsf{Ext} : \{0, 1\}^d \times \mathcal{X} \to \{0, 1\}^\ell$ is called a seeded strong average-case $(k,\varepsilon)$-extractor, if it holds for all random variables $X$ with support $\mathcal{X}$ and $Z$ defined on some finite support that if $\Tilde{H}_\infty(X|Z) \geq k$, then it holds that the statistical distance of the following distributions is a most $\varepsilon$
\[
(\mathsf{seed}, \mathsf{Ext}(\mathsf{seed},X), Z) \approx_\varepsilon (\mathsf{seed}, U, Z) 
\]
where $\mathsf{seed} \gets \{0,1\}^d$ and $U\gets \{0,1\}^\ell$.
 \end{definition}
Recall that a hash function $\mathsf{Hash} : \mathcal{X} \to \mathcal{Y}$ is a universal hash if for all $x \neq x' \in\mathcal{X}$ it holds that
\[
\Pr[\mathsf{Hash}(x) = \mathsf{Hash}(x')] \leq \frac{1}{|\mathcal{Y}|}
\]
where the probability is taken over the choice of the hash function. It is shown~\cite{DRS04,DORS08} that any universal hash function is an average-case randomness extractor.
\begin{lemma}[Leftover Hash Lemma]\label{lemma:lhl}
    Let $X$ be a random-variable supported on a finite set $\mathcal{X}$ and let $Z$ be a (possibly correlated) random variable supported on a finite set $\mathcal{Z}$ such that $\Tilde{H}_\infty(X|Z) \geq k$. Let $\mathsf{Hash}: \mathcal{X} \to \{0,1\}^\ell$, where $\ell \leq k -2 \log\left(\frac{1}{\varepsilon}\right)$, be a family of universal functions. Then $\mathsf{Hash}$ is a seeded strong average-case $(k,\varepsilon)$-extractor.
\end{lemma}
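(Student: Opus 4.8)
The plan is to run the classical collision-probability argument. The crux is an elementary linear-algebra fact: for any probability distribution $P$ on a finite set $S$ with $|S| = N$, writing $U_S$ for the uniform distribution on $S$ and $\mathsf{Col}(P) = \sum_{s \in S} P(s)^2$ for the collision probability, one has $\tfrac12\|P - U_S\|_1 \le \tfrac12\sqrt{N\cdot\mathsf{Col}(P) - 1}$. This follows from Cauchy--Schwarz, $\|P - U_S\|_1 \le \sqrt{N}\,\|P - U_S\|_2$, together with the identity $\|P - U_S\|_2^2 = \mathsf{Col}(P) - \tfrac1N$. I would state and prove this as a one-line preliminary bound.

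Next I would introduce the seed $\mathsf{seed} \gets \{0,1\}^d$, drawn independently of $(X,Z)$, and set $Y = \mathsf{Hash}(\mathsf{seed}, X)$. The goal is to bound the statistical distance between $(\mathsf{seed}, Y, Z)$ and $(\mathsf{seed}, U_\ell, Z)$; since conditioning on $Z = z$ leaves $(\mathsf{seed}, U_\ell)$ uniform on $d + \ell$ bits, it suffices to bound $\mathbb{E}_z\bigl[\tfrac12\|(\mathsf{seed},Y)|_{Z=z} - U_{d+\ell}\|_1\bigr]$. For this I compute the collision probability of $(\mathsf{seed}, Y)$ conditioned on $Z = z$: drawing two independent copies $(\mathsf{seed}, X)$ and $(\mathsf{seed}', X')$ from the $z$-conditional distribution, the seeds agree with probability $2^{-d}$, and on that event the hash values agree with probability $\Pr[X = X' \mid Z = z] + \Pr[X \neq X']\cdot\Pr[\mathsf{Hash}(\mathsf{seed},X) = \mathsf{Hash}(\mathsf{seed},X') \mid X \neq X']$, which by universality of $\mathsf{Hash}$ is at most $\Pr[X = X'\mid Z=z] + 2^{-\ell}$. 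Hence $\mathsf{Col}\bigl((\mathsf{seed},Y)|_{Z=z}\bigr) \le 2^{-d}\bigl(\Pr[X=X'\mid Z=z] + 2^{-\ell}\bigr)$.

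Finally I would take the expectation over $z$. From $\Pr[X = X'\mid Z = z] = \sum_x \Pr[X=x\mid Z=z]^2 \le \max_x \Pr[X = x\mid Z = z]$ we get, by definition of average conditional min-entropy, $\mathbb{E}_z\bigl[\Pr[X = X'\mid Z = z]\bigr] \le \mathbb{E}_z\bigl[\max_x \Pr[X = x \mid Z = z]\bigr] = 2^{-\tilde H_\infty(X|Z)} \le 2^{-k}$. Plugging the per-$z$ collision bound into the linear-algebra fact and applying Jensen's inequality (concavity of $\sqrt{\cdot}$) to pull the expectation inside the square root yields
\[
\mathbb{E}_z\Bigl[\tfrac12\bigl\|(\mathsf{seed},Y)|_{Z=z} - U_{d+\ell}\bigr\|_1\Bigr] \;\le\; \tfrac12\sqrt{2^{d+\ell}\cdot 2^{-d}\bigl(2^{-k} + 2^{-\ell}\bigr) - 1} \;=\; \tfrac12\sqrt{2^{\ell - k}} \;=\; 2^{(\ell - k)/2 - 1},
\]
and the hypothesis $\ell \le k - 2\log(1/\varepsilon)$ makes this at most $\varepsilon/2 \le \varepsilon$, as required.

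The argument is entirely classical, so I do not anticipate a genuine obstacle; the only points deserving care are handling the auxiliary variable $Z$ correctly (passing to $z$-conditional distributions and using Jensen to exchange expectation with the square root, which is precisely where the \emph{average-case} nature of $\tilde H_\infty$ is used rather than a worst-case bound over $z$), and invoking universality only on the event $\{X \neq X'\}$.
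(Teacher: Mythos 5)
Your proof is correct. The paper itself gives no proof of this lemma---it imports it from~\cite{DRS04,DORS08}---and your argument (the per-$z$ collision-probability/Cauchy--Schwarz bound, invoking universality only on $\{X\neq X'\}$, followed by Jensen over $z$ to exploit the average-case min-entropy) is exactly the standard proof from those references, with the arithmetic $2^{\ell}\bigl(2^{-k}+2^{-\ell}\bigr)-1=2^{\ell-k}$ and the final bound $2^{(\ell-k)/2-1}\le\varepsilon/2$ checking out.
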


\subsection{Pseudorandom Functions}\label{subsec:prf}
We recall the notion of a pseudorandom function~\cite{GGM}. A pseudorandom function (PRF) is a keyed function 
\[
\PRF: \{0,1\}^\lambda \times \{0,1\}^\lambda \to \{0,1\}^\lambda 
\]
that is computationally indistinguishable from a truly random function. More precisely, we require that there exists a negligible function $\negl$ such that for all QPT~$\{\adv_\lambda\}$, for all~$\lambda\in \N$, it holds that the following distributions are computationally indistinguishable
\[
\adv_\lambda(1^\lambda)^{\PRF(k, \cdot)} \approx \adv_\lambda(1^\lambda)^{f(\cdot)}
\]
where $k\gets \{0,1\}^\lambda$ and $f$ is a uniformly-sampled function. It is well-known that quantum-secure PRFs can be built from any one-way function~\cite{GGM}.

\subsection{One-Time Signatures}\label{subsec:signatures}
We recall the notion of a one-time signature scheme~\cite{Goldreich}.

\begin{definition}[One-Time Signature]\label{def:sig}
A \emph{one-time signature (OTS)} scheme is defined as a tuple of algorithms~$(\SGen,\Sign,\Ver)$ such that:
\begin{itemize}
\item $(\vk,\sk) \gets \SGen(1^{\lambda})$: A polynomial-time algorithm which, on input the security parameter $1^{\lambda}$, outputs two bit strings~$\vk$ and~$\sk$.
\item $\sigma \gets \Sign(\sk, m)$: A polynomial-time algorithm which, on input the signing key $\sk$ and a message $m$, outputs signature $\sigma$.
\item $\{0,1\} \gets \Ver(\vk, m, \sigma)$: A polynomial-time algorithm which, on input the verification key $\vk$, a message $m$, and a signature $\sigma$, returns a bit denoting accept or reject.
\end{itemize}
\end{definition}

\noindent The OTS scheme is \emph{correct} if for all $\lambda\in \N$ and all messages $m$ it holds that
\begin{align*}
    \Pr\left[ 1 = \Ver(\vk, m, \Sign(\sk, m)) : (\vk,\sk) \gets \SGen(1^{\lambda}) \right] = 1.
\end{align*}

Next we define the notion of strong existential unforgeability for OTS, which states that one should not be able to produce a different signature (even if on the same message) than the one provided by the signing oracle.
It is well-known that strongly unforgeable signatures can be constructed from any one-way function (OWF)~\cite{Goldreich}.
For convenience we define a slightly weaker notion, where the message to be signed is fixed in advance -- this notion is clearly implied by the standard one, where the attacker can query the signing oracle adaptively.

\begin{definition}[Strong Existential Unforgeability]\label{def:unforge}
We say that an OTS scheme $(\SGen,\Sign,\Ver)$ satisfies \emph{{(quantum-secure)} strong existential unforgeability} if there exists a negligible function $\negl$ such that for all QPT~$\{\adv_\lambda\}$, for all~$\lambda\in \N$, and for all messages~$m$, it holds that
\[
    \Pr\left[ 1 = \Ver(\vk, m^*, \sigma^*) \text{ and } (m^*, \sigma^*) \neq (m, \sigma) :
    \begin{array}{l}
         (\vk,\sk) \gets \SGen(1^{\lambda});  \\
         \sigma \gets \Sign(\sk, m);\\
         (m^*, \sigma^*) \gets \adv_\lambda(\vk, m, \sigma)
    \end{array}
       \right] = \negl(\lambda).
\]
\end{definition}

\paragraph{Indistinguishability of Signature States.} We provide a formal statement and a proof of the indistinguishability of our signature states and the corresponding classical mixture.  This proof is inspired by, and closely follows, the work of~\cite{HMY22}.

\begin{lemma}\label{lmm:OTS}
Let $(\SGen,\Sign,\Ver)$ be an OTS scheme that satisfies strong existential unforgeability. Then the following distribution ensambles are computationally indistinguishable
\begin{align*}
\left\{\frac{\ket{0, \sigma_0} + \ket{1, \sigma_1}}{\sqrt{2}} \frac{\bra{0, \sigma_0} + \bra{1, \sigma_1}}{\sqrt{2}}, (\vk_0, \vk_1)\right\} \approx
\left\{\frac{\ketbra{0, \sigma_0}{0, \sigma_0} + \ketbra{1, \sigma_1}{1, \sigma_1}}{{2}}, (\vk_0, \vk_1)\right\}
\end{align*}
where $(\vk_b,\sk_b)\gets \SGen(1^{\lambda})$ and $\sigma_b \gets \Sign(\sk_b, b)$, for $b\in\{0,1\}$.
\end{lemma}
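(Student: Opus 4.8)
My plan is to prove the contrapositive: starting from a QPT distinguisher $D$ that separates the coherent state from the classical mixture with non-negligible advantage $\delta$, I construct a QPT adversary $\mathcal B$ that violates strong existential unforgeability of the OTS (\cref{def:unforge}). Conceptually, distinguishing the superposition from the mixture amounts to detecting \emph{coherence} between the two ``hidden basis states'' $\ket{0,\sigma_0}$ and $\ket{1,\sigma_1}$, and \cref{lemma:dis} turns any such detector into a polynomial-time unitary that actually \emph{maps} $\ket{0,\sigma_0}$ to $\ket{1,\sigma_1}$ — which is precisely a procedure that forges the signature $\sigma_1$.

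First I would rephrase the claim as a pure-state distinguishing task. A one-line calculation shows that the mixture on the right-hand side of \cref{lmm:OTS} equals $\tfrac12\ketbra{\Psi}{\Psi}+\tfrac12\ketbra{\Phi}{\Phi}$, where $\ket{\Psi}=\tfrac{1}{\sqrt2}(\ket{0,\sigma_0}+\ket{1,\sigma_1})$ is the state on the left and $\ket{\Phi}=\tfrac{1}{\sqrt2}(\ket{0,\sigma_0}-\ket{1,\sigma_1})$. Hence the \emph{same} algorithm $D$, now viewed as a distinguisher between $\ket\Psi$ and $\ket\Phi$, has advantage $2\delta$ (the $\ket\Psi$-branch of the mixture contributes nothing). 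Moreover one checks $\langle\Psi|\Phi\rangle=0$, both are $n$-qubit pure states with $n=1+|\sigma|$ (pad all signatures to a common length; ancilla registers used by $D$ are absorbed into $n$, which is legitimate by the last sentence of \cref{lemma:dis}), and crucially $\tfrac{1}{\sqrt2}(\ket\Psi+\ket\Phi)=\ket{0,\sigma_0}$ and $\tfrac{1}{\sqrt2}(\ket\Psi-\ket\Phi)=\ket{1,\sigma_1}$.

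Next I would invoke \cref{lemma:dis}. Condition on a fixed value of the tuple $(\vk_0,\vk_1,\sigma_0,\sigma_1)$; since $D$ reads $(\vk_0,\vk_1)$ as classical input, on this tuple it separates the corresponding $\ket\Psi,\ket\Phi$ with advantage $2\delta_{\mathrm{tuple}}$, where $\delta_{\mathrm{tuple}}\ge 0$ is its per-tuple advantage on the original problem and $\mathbb{E}[\delta_{\mathrm{tuple}}]\ge\delta$ (replace $D$ by its complement so that signs agree). \cref{lemma:dis} then yields a polynomial-time unitary $U$ with $\tfrac12|\bra{1,\sigma_1}U\ket{0,\sigma_0}+\bra{0,\sigma_0}U\ket{1,\sigma_1}|=2\delta_{\mathrm{tuple}}$. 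The key point — which I would justify by inspecting the proof of \cref{lemma:dis}, following \cite{HMY22} — is that $U$ is assembled from the circuit of $D$ alone, so it depends on the hard-wired $(\vk_0,\vk_1)$ but \emph{not} on $(\sigma_0,\sigma_1)$. By the triangle inequality, together with $|\bra{0,\sigma_0}U\ket{1,\sigma_1}|=|\bra{1,\sigma_1}U^\dagger\ket{0,\sigma_0}|$, at least one of $U,U^\dagger$ — call it $V$ — satisfies $|\bra{1,\sigma_1}V\ket{0,\sigma_0}|\ge 2\delta_{\mathrm{tuple}}$; equivalently, applying $V$ to $\ket{0,\sigma_0}$ and measuring in the computational basis returns $(1,\sigma_1)$ with probability at least $4\delta_{\mathrm{tuple}}^2$.

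Finally comes the forgery. On challenge verification key $\vk^{*}$, $\mathcal B$ queries the signing oracle on the message $0$ (and ignores the reply), sets $\vk_1:=\vk^{*}$, samples a fresh pair $(\vk_0,\sk_0)\gets\SGen(1^\lambda)$ and $\sigma_0\gets\Sign(\sk_0,0)$, assembles $U=U_{\vk_0,\vk_1}$, picks $V\in\{U,U^\dagger\}$ uniformly, prepares $\ket{0,\sigma_0}$ (with $\ket0$-ancillas), applies $V$, measures the first $n$ qubits to obtain a pair $(b,\tau)$, and outputs $(1,\tau)$. Whenever $\tau=\sigma_1$, the pair $(1,\tau)$ is a valid signature of the message $1$ under $\vk^{*}$ and is distinct from the queried pair $(0,\sigma^{*})$, hence a legitimate strong forgery; averaging over the tuple and the coin for $V$, and using $\mathbb{E}[\delta_{\mathrm{tuple}}^2]\ge(\mathbb{E}[\delta_{\mathrm{tuple}}])^2\ge\delta^2$ (Jensen), $\mathcal B$ wins with probability at least $2\delta^2$, contradicting \cref{def:unforge} unless $\delta=\negl(\lambda)$. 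The step I expect to be the main obstacle — and the one that keeps the argument from being circular — is precisely establishing that $U$ is oblivious to $(\sigma_0,\sigma_1)$: \cref{lemma:dis} only hands us a unitary \emph{relating} $\ket{0,\sigma_0}$ and $\ket{1,\sigma_1}$, so a naive use would seem to require already knowing $\sigma_1$; the resolution is that $\mathcal B$ plants its \emph{self-generated} $\sigma_0$ (under a key $\vk_0$ it controls) and lets $U$ \emph{manufacture} the unknown $\sigma_1$ under the challenge key, with the ``wrong'' query message $0$ guaranteeing freshness. The remaining work — the factor-of-two bookkeeping in the pure-state reduction, the sign/averaging subtlety, signature-length padding, and ancilla handling — is routine.
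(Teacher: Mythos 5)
Your proposal is correct and its core reduction is the same as the paper's: both convert the distinguisher, via \cref{lemma:dis}, into a polynomial-time unitary that maps $\ket{0,\sigma_0}$ toward $\ket{1,\sigma_1}$, and both forge by planting the challenge key as $\vk_1$, self-generating $(\vk_0,\sigma_0)$, applying the unitary, and measuring the message--signature register (any valid signature on $1$ is a fresh strong forgery; Jensen handles the averaging). Where you genuinely diverge is in handling the key dependence of $U$. The paper purifies the key generation into a register $\mathsf{s}$, applies \cref{lemma:dis} \emph{once} to the two fixed purified states $\ket{\Psi_0},\ket{\Psi_1}$, and uses the ``does not act on those qubits'' clause to ensure that the single fixed $U$ ignores $\mathsf{s}$ (hence the secret keys); this keeps the lemma entirely black-box. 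You instead condition on each tuple $(\vk_0,\vk_1,\sigma_0,\sigma_1)$ and apply the lemma per tuple, which forces you to argue that the resulting unitary is assembled from the distinguisher's circuit alone, so that it is computable from $(\vk_0,\vk_1)$ at reduction time and oblivious to $\sigma_1$. That property does hold for the construction in~\cite{distswap,HMY22} (the unitary is essentially the distinguisher's circuit conjugating a phase flip on its output qubit), and you correctly flag it as the crux, but it is not part of the statement of \cref{lemma:dis} as quoted, so your route needs this white-box strengthening while the paper's purification trick avoids it. Two small repairs: the per-tuple advantage $\delta_{\mathrm{tuple}}$ need not be nonnegative (only its expectation is at least $\delta$), though your Jensen step on $\delta_{\mathrm{tuple}}^2$ survives with $\lvert\delta_{\mathrm{tuple}}\rvert$; and the advice/ancilla issue is resolved by tensoring $\ket{\alpha_\lambda}$ onto both states before invoking the lemma (as the paper does), not by the ``does not act on those qubits'' clause. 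With these points made explicit, your argument yields the same conclusion, with forging probability $\Omega(\delta^2)$.
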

\begin{proof}
By convexity, it suffices to show that no QPT adversary acting on registers $\mathsf{v}$ and $\mathsf{x}$ can distinguish between the states $\ket{\Psi_0}$ and $\ket{\Psi_1}$ with non-negligible probability, where
\[
\ket{\Psi_b} = \sum_{(\vk, \sk)} \sqrt{D(\vk, \sk)} \ket{\vk, \sk}_{\mathsf{s}} \ket{\vk}_\mathsf{v} \otimes \frac{\ket{0,\sigma_0}_\mathsf{x} + (-1)^b \ket{1, \sigma_1}_\mathsf{x} }{\sqrt{2}}
\]
and we adopt the following convention for the notation:
\[
\sk = \{\sk_0,\sk_1\} \text{ ; } \vk = \{\vk_0,\vk_1\} \text{ ; } \sigma_b = \Sign(\sk_b, b) \text{ and  } D(\vk, \sk) = \Pr[(\vk, \sk) = \SGen(1^\lambda)].
\]
Assume towards contradiction that there exists a QPT distinguisher acting on registers $\mathsf{v}$, $\mathsf{x}$, as well as an auxiliary register $\ket{\alpha_\lambda}_\mathsf{a}$ that succeeds with probability $\delta$. Then, by~\Cref{lemma:dis} there exists a polynomial-time computable unitary $U$ such that
\[
\frac{1}{2} \left|
\begin{array}{l}
     \bra{\Psi_1'}_{\mathsf{s}, \mathsf{v}, \mathsf{x}}\bra{\alpha_\lambda}_{\mathsf{a}} (U_{\mathsf{v}, \mathsf{x}, \mathsf{a}} \otimes \mathsf{Id}_{\mathsf{s}})\ket{\Psi_0'}_{\mathsf{s}, \mathsf{v}, \mathsf{x}}  \ket{\alpha_\lambda}_{\mathsf{a}}\\
    +\bra{\Psi_0'}_{\mathsf{s}, \mathsf{v}, \mathsf{x}}\bra{\alpha_\lambda}_{\mathsf{a}} (U_{\mathsf{v}, \mathsf{x}, \mathsf{a}} \otimes \mathsf{Id}_{\mathsf{s}})\ket{\Psi_1'}_{\mathsf{s}, \mathsf{v}, \mathsf{x}}  \ket{\alpha_\lambda}_{\mathsf{a}}
\end{array}
\right| =\delta
\]
where
\[
\ket{\Psi_b'} = \frac{\ket{\Psi_0} + (-1)^b\ket{\Psi_1}}{\sqrt{2}} = \sum_{(\vk, \sk)} \sqrt{D(\vk, \sk)} \ket{\vk, \sk}_{\mathsf{s}} \ket{\vk}_\mathsf{v} \otimes \ket{b, \sigma_b}_\mathsf{x}.
\]
Consequently, it must be the case that either
\begin{itemize}
    \item $\bra{\Psi_1'}_{\mathsf{s}, \mathsf{v}, \mathsf{x}}\bra{\alpha_\lambda}_{\mathsf{a}} (U_{\mathsf{v}, \mathsf{x}, \mathsf{a}} \otimes \mathsf{Id}_{\mathsf{s}})\ket{\Psi_0'}_{\mathsf{s}, \mathsf{v}, \mathsf{x}}  \ket{\alpha_\lambda}_{\mathsf{a}} \geq \delta$, or
    \item $\bra{\Psi_0'}_{\mathsf{s}, \mathsf{v}, \mathsf{x}}\bra{\alpha_\lambda}_{\mathsf{a}} (U_{\mathsf{v}, \mathsf{x}, \mathsf{a}} \otimes \mathsf{Id}_{\mathsf{s}})\ket{\Psi_1'}_{\mathsf{s}, \mathsf{v}, \mathsf{x}}  \ket{\alpha_\lambda}_{\mathsf{a}} \geq \delta$.
\end{itemize}
Without loss of generality we assume that the former holds, but the argument works symmetrically also for the latter case. We will show that this leads to a contradiction with a reduction against the one-time unforgeability of OTS. In fact we will consider an even weaker definition where the adversary receives \emph{no signature}.

On input a verification key $\vk_1$ and an advice $\ket{\alpha_\lambda}_\mathsf{a}$ the reduction samples a uniform $(\vk_0, \sk_0) \gets \SGen(1^\lambda)$, and sets $\vk = \{\vk_0, \vk_1\}$. Then it computes $\sigma_0 \gets \Sign(\sk_0, 0)$ and
\[
U_{\mathsf{v}, \mathsf{x}, \mathsf{a}}\ket{\vk}_{\mathsf{v}} \ket{0,\sigma_0}_\mathsf{x} \ket{\alpha_\lambda}_{\mathsf{a}}
\]
and returns the result of a measurement of the $\mathsf{x}$ register in the computational basis.

We now analyze the success probability of the reduction in producing a valid signature, for a fixed key pair $(\vk_1, \sk_1)$. Let us denote by $\Sigma_1$ the set of all valid signatures on $1$ under $\vk_1$, and by $\sigma_1 = \Sign(\sk_1, 1)$. Then we have that the success probability of the reduction equals
\begin{align*}
    \sum_{\sigma_1' \in\Sigma_1}& |\Sigma_1|\cdot\left\| \bra{1, \sigma_1'}_{\mathsf{x}} U_{\mathsf{v}, \mathsf{x}, \mathsf{a}}\ket{\vk}_{\mathsf{v}} \ket{0,\sigma_0}_\mathsf{x} \ket{\alpha_\lambda}_{\mathsf{a}} \right\|^2 \\&\geq
    \left\| \bra{1, \sigma_1}_{\mathsf{x}} U_{\mathsf{v}, \mathsf{x}, \mathsf{a}}\ket{\vk}_{\mathsf{v}} \ket{0,\sigma_0}_\mathsf{x} \ket{\alpha_\lambda}_{\mathsf{a}} \right\|^2
    \\
&\geq
    \left| \bra{\vk}_{\mathsf{v}}\bra{1, \sigma_1}_{\mathsf{x}} \bra{\alpha_\lambda}_{\mathsf{a}} U_{\mathsf{v}, \mathsf{x}, \mathsf{a}}\ket{\vk}_{\mathsf{v}} \ket{0,\sigma_0}_\mathsf{x} \ket{\alpha_\lambda}_{\mathsf{a}} \right|^2
        \\
&=
    \left| \bra{\vk, \sk}_\mathsf{s}\bra{\vk}_{\mathsf{v}}\bra{1, \sigma_1}_{\mathsf{x}} \bra{\alpha_\lambda}_{\mathsf{a}} (U_{\mathsf{v}, \mathsf{x}, \mathsf{a}}\ket{\vk}_{\mathsf{v}}\otimes \mathsf{Id}_\mathsf{s}) \ket{\vk, \sk}_\mathsf{s}\ket{0,\sigma_0}_\mathsf{x} \ket{\alpha_\lambda}_{\mathsf{a}} \right|^2
\end{align*}
where the second inequality follows from the fact that inserting $\bra{\vk}_{\mathsf{v}}$ and $\bra{\alpha_\lambda}_{\mathsf{a}}$ can only decrease the norm. Now, over the random choice of $(\vk, \sk)$ the success probability of the reduction can be lower bounded by
\begin{align*}
    &\mathbb{E}_{(\vk, \sk)}\left[ \left| \bra{\vk, \sk}_\mathsf{s}\bra{\vk}_{\mathsf{v}}\bra{1, \sigma_1}_{\mathsf{x}} \bra{\alpha_\lambda}_{\mathsf{a}} (U_{\mathsf{v}, \mathsf{x}, \mathsf{a}}\ket{\vk}_{\mathsf{v}}\otimes \mathsf{Id}_\mathsf{s}) \ket{\vk, \sk}_\mathsf{s}\ket{0,\sigma_0}_\mathsf{x} \ket{\alpha_\lambda}_{\mathsf{a}} \right|^2\right] \\
    &\geq \left|\mathbb{E}_{(\vk, \sk)}\left[  \bra{\vk, \sk}_\mathsf{s}\bra{\vk}_{\mathsf{v}}\bra{1, \sigma_1}_{\mathsf{x}} \bra{\alpha_\lambda}_{\mathsf{a}} (U_{\mathsf{v}, \mathsf{x}, \mathsf{a}}\ket{\vk}_{\mathsf{v}}\otimes \mathsf{Id}_\mathsf{s}) \ket{\vk, \sk}_\mathsf{s}\ket{0,\sigma_0}_\mathsf{x} \ket{\alpha_\lambda}_{\mathsf{a}}\right] \right|^2 \\
    &\geq \delta
\end{align*}
where the first inequality follows from Jensen's inequality. This contradicts the unforgeability of OTS and concludes our proof.
\end{proof}

\section{Quantum Public-Key Encryption}\label{sec:QPKE}
In the following we define and construct the central cryptographic primitive of this work, which we refer to as \emph{quantum-public-key encryption}. 

\subsection{Definitions}

The syntax for this primitive is taken almost in verbatim from~\cite{QPKE}, although in this work we consider a stronger notion of security. For notational convenience, we define the primitive for encrypting one-bit messages, but it is easy to generalize the notion and the corresponding construction to multiple bits, via the standard bit-by-bit encryption. Security of the multi-bit construction follows by a standard hybrid argument.
\begin{definition}[QPKE]\label{def:qpke}
A \emph{quantum-public-key encryption (PKE)} scheme is defined as a tuple of algorithms~$(\SKGen, \PKGen,\Enc,\Dec)$ such that:
\begin{itemize}
\item $\sk \gets \SKGen(1^{\lambda})$: A PPT algorithm which, on input the security parameter $1^{\lambda}$ outputs a secret bit string $\sk$.
\item $({\rho}, \pk) \gets \PKGen(\sk)$: A QPT algorithm which, on input the secret key $\sk$, outputs a (possibly mixed) quantum state $\rho$ and a bit strings $\pk$.
\item $\ct \gets \Enc(\rho, \pk, m)$: A QPT algorithm which, on input the public key $(\rho, \pk)$ and a message~$m\in\{0,1\}$, outputs a ciphertext~$\ct$.
\item $m \gets \Dec(\sk, \ct)$: A QPT algorithm which, on input the secret key $\sk$ and the ciphertext $\ct$, outputs a message $m\in\{0,1\}$.
\end{itemize}
\end{definition}
A QPKE scheme $(\SKGen, \PKGen,\Enc,\Dec)$ satisfies \emph{correctness} if for all $\lambda\in \N$ and all $m\in\{0,1\}$ it holds that:
\begin{align*}
    \Pr\left[ m = \Dec(\sk, \ct) : \sk \gets \SKGen(1^{\lambda}) ; ({\rho}, \pk) \gets \PKGen(\sk); \ct \gets \Enc(\rho, \pk, m) \right] = 1.
\end{align*}

\paragraph{Everlasting Security.} Next, we define the security notion of \emph{everlasting security} for QPKE. Informally, we require that the message is unconditionally hidden from the eyes of an attacker, even if a QPT attacker is allowed to tamper with the public key arbitrarily. However, the attacker is supplied a \emph{single copy} of the public key.
\begin{definition}[Everlasting Security]\label{def:strong_ever}
For a family of QPT algorithms $\{\adv_\lambda \}_{\lambda\in\N}$, we define the experiment $\mathsf{Exp}^{\adv_\lambda}(1^\lambda, m)$ as follows:
\begin{enumerate}
    \item Sample $\sk \gets \SKGen(1^\lambda)$ and $({\rho}, \pk) \gets \PKGen(\sk)$ and send the corresponding public key $(\rho,\pk)$ to $\adv_\lambda$.
    \item $\adv_\lambda$ returns two quantum registers. The first register is parsed as the modified public-key register, whereas the second register is arbitrary and will be referred to as the adversary's~internal~register.
    \item Compute $\ct$ by applying the map defined by $\Enc(\cdot, \pk, m)$ to the public-key register returned by the adversary in the previous round.
    \item The output of the experiment is defined to be the joint state of $\ct$ and the internal register of the adversary.
\end{enumerate}
Then we say that a QPKE scheme $(\SKGen, \PKGen,\Enc,\Dec)$ satisfies \emph{everlasting security} if there exists a negligible function $\negl$ such that for all $\lambda\in \N$ and all QPT $\adv_\lambda$ it holds that
\[
\mathsf{Td}\left(\mathsf{Exp}^{\adv_\lambda}(1^\lambda, 0), \mathsf{Exp}^{\adv_\lambda}(1^\lambda, 1)\right) = \negl(\lambda).
\]
\end{definition}
Let us comment on the definition as stated above.
First, we remark that the definition can be easily extended to the case of multi-bit messages, provided that the syntax of the encryption scheme is extended accordingly. We also mention that an alternative definition might also allow the adversary to do some arbitrary post-processing on the output of the experiment. However, our definition is equivalent (and arguably simpler) by the monotonicity of the trace distance.

An important point of our definition (which distinguishes it from prior works) is that the attacker is allowed to modify the quantum states arbitrarily, although it cannot tamper with the classical information (such as~$\pk$ or~$\ct$).
This models the presence of \emph{authenticated classical channels}, which are assumed to deliver the classical information faithfully.
Note that the same assumption is also present (although somewhat more implicitly) for the standard notion of \emph{classical} PKE, where the encryption algorithm in the CPA/CCA-security experiment is always provided as input the correct public key sampled by the challenger.
This restriction is of course necessary, since if the attacker is allowed to choose the public key arbitrarily, then the definition would be impossible to achieve.

Finally, we mention that a stronger definition would allow the adversary to see a polynomial number of copies of the public key, instead of a single one. Unfortunately the work of~\cite{QPKE} shows a key recovery attack against any QPKE, if the attacker is given sufficiently many copies of the public key and it is allowed to run in unbounded time. This immediately rules out any QPKE with everlasting security in the presence of polynomial copies of the public key, since an unbounded distinguisher can simply run such key recovery algorithm, and decrypt the challenge ciphertext using the honest decryption algorithm. To overcome this limitation, we define the notion of computational security.

\paragraph{Computational Security.} We define the weaker notion of computational security for QPKE, where the message is only required to be kept hidden against computationally bounded adversary. The upshot is that this can hold even if the adversary is given access to multiple copies of the public key. We present a formal definition below.
\begin{definition}[Computational Security]\label{def:strong_comp}
For a family of QPT algorithms $\{\adv_\lambda \}_{\lambda\in\N}$, we define the experiment $\mathsf{Exp}^{\adv_\lambda}(1^\lambda, m, n)$ as follows:
\begin{enumerate}
    \item Sample $\sk \gets \SKGen(1^\lambda)$ and 
    $\left\{({\rho}_i, \pk_i) \gets \PKGen(\sk)\right\}_{i = 1, \dots, n}$ and send the corresponding public keys $(\rho_i,\pk_i)$ to $\adv_\lambda$.
    \item $\adv_\lambda$ returns two quantum registers. The first register is parsed as the modified public-key register, whereas the second register is arbitrary and will be referred to as the adversary's~internal~register.
    \item Compute $\ct$ by applying the map defined by $\Enc(\cdot, \pk_1, m)$ to the public-key register returned by the adversary in the previous round.
    \item The output of the experiment is defined to be the joint state of $\ct$ and the internal register of the adversary.
\end{enumerate}
Then we say that a QPKE scheme $(\SKGen, \PKGen,\Enc,\Dec)$ satisfies \emph{computational security} if there exists a negligible function $\negl$ such that for all $\lambda\in \N$, all polynomials $n = n(\lambda)$, and all QPT $\adv_\lambda$ it holds that the distributions
\[
\mathsf{Exp}^{\adv_\lambda}(1^\lambda, 0, n) \approx \mathsf{Exp}^{\adv_\lambda}(1^\lambda, 1, n) 
\]
are computationally indistinguishable.
\end{definition}

\subsection{Everlasting Secure QPKE}

We describe our scheme below.
As the only computational ingredient, we assume the existence of a {quantum-secure} strongly existentially unforgeable one-time signature scheme $(\SGen, \Sign, \Ver)$, see \cref{subsec:signatures}.
{As discussed, this can be constructed from any quantum-secure one-way function.}

\begin{itemize}
\item $\SKGen(1^{\lambda})$:
\begin{itemize}
\item Sample two key pairs  $(\sk_{0}, \vk_{0}) \gets \SGen(1^\lambda)$ and $(\sk_{1}, \vk_{1}) \gets \SGen(1^\lambda)$.
\item Compute $\sigma_{0} \gets \Sign(\sk_{0}, 0)$ and $\sigma_{1} \gets \Sign(\sk_{1}, 1)$.
\item Sample a bit $d_0 \gets \{0,1\}$.
\item Return $\sk = (\vk_0, \vk_1, \sigma_{0}, \sigma_{1}, d_0)$.
\end{itemize}

\item $\PKGen(\sk)$:
\begin{itemize}
\item Define the state \[\ket{\Psi} = \frac{\ket{0,\sigma_{0}} + (-1)^{d_0}\ket{1,\sigma_{1}}}{\sqrt{2}}.\]
This state is efficiently computable by preparing an EPR pair and CNOT-ing the bits of the signatures into an auxiliary register, controlled on the value of the first qubit. The relative phase can be then added by a controlling the application of $\mathsf{Z}$ with $d_0$.
\item Set the quantum part of the public key~$\rho$ to be the state $\ket{\Psi}$ and set the classical part of the public key and the classical secret key to  $\pk = (\vk_{0}, \vk_{1})$.
\end{itemize}

    \item $\Enc(\rho, \pk, m)$:
    \begin{itemize}
        \item Project $\rho$ onto the subspace of valid signatures of $0$ and $1$, under $\vk_{0}$ and $\vk_{1}$, respectively.
        More precisely, denote by~$\Sigma_{0}$ and~$\Sigma_{1}$ the set of accepting signatures on~$0$ and~$1$, under~$\vk_{0}$ and~$\vk_{1}$, respectively, and consider the projector
        \[
        \Pi = \sum_{\sigma \in \Sigma_{0}} \proj{0,\sigma} + \sum_{\sigma \in \Sigma_{1}} \proj{1,\sigma}.
        \]
        Apply the projective measurement~$\{\Pi, \mathsf{Id} - \Pi\}$, and abort the execution {(return~$\perp$)} if the measurement returns the second outcome.
        Note that this measurement can be implemented efficiently by running the verification algorithm coherently, CNOT-ing the output bit on a separate register and measuring it.

        \item Measure the residual state in the Hadamard basis, to obtain a bit string $(d_1, d_2)$, where we denote by~$d_1\in\{0,1\}$ the first bit of the measurement outcome and by $d_2$ the rest.
        \item Return the following as the classical ciphertext:
        \begin{equation}\label{eq:ct}
        \ct = (m\oplus d_1, d_2).
        \end{equation}
    \end{itemize}
    \item $\Dec(\sk, \ct)$:
    \begin{itemize}
        \item Parse $\ct = (\ct_1, \ct_2)$, where $\ct_1 \in \{0,1\}$ is one bit, and return
        \begin{equation}\label{eq:m}
          m = d_0 \oplus \ct_1 \oplus \ct_2 \cdot (\sigma_0 {\oplus} \sigma_1).
        \end{equation}
    \end{itemize}
\end{itemize}

\paragraph{Analysis.} We claim that the scheme satisfies correctness. First, observe that the state $\rho$ taken as input by the encryption algorithm in the image of the projector~$\Pi$ as defined above. Consequently, applying the projective measurement $\{\Pi, \mathsf{Id} - \Pi\}$ returns the outcome associated with~$\Pi$ with certainty and does not change the state. Applying the Hadamard transformation to the state $\ket{\Psi}$ gives
\[
\mathsf{H} \ket{\Psi} \propto \sum_{d_1, d_2} (-1)^{(d_1,d_2) \cdot (0, \sigma_{0})} \ket{d_1, d_2} + (-1)^{d_0 \oplus (d_1, d_2) \cdot (1, \sigma_{1})} \ket{d_1, d_2}  = \sum_{d_1, d_2 \;:\; (d_1, d_2) \cdot (1, \sigma_{0} \oplus \sigma_{1}) = d_0} \ket{d_1, d_2},
\]
omitting overall normalization factors.
Therefore, a measurement returns a uniformly random bit string~$(d_1,d_2)$ satisfying 
\begin{align*}
  d_1 \oplus d_2 \cdot (\sigma_{0} \oplus \sigma_{1})&= d_0.
\end{align*}
Substituting \cref{eq:ct} in \cref{eq:m} and using this relation, we obtain
\begin{align*}
  d_0 \oplus \ct_1 \oplus \ct_2 \cdot (\sigma_0 \oplus \sigma_1)
= d_0 \oplus (m\oplus d_1) \oplus d_2 \cdot (\sigma_0 \oplus \sigma_1)
= m,
\end{align*}
as desired.
Next, we show that the scheme satisfies everlasting security.
\begin{theorem}[Everlasting security]\label{thm:main}
If quantum-secure one-way functions exist, then the QPKE $(\SKGen, \PKGen,\Enc,\Dec)$ satisfies everlasting security.
\end{theorem}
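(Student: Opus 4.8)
The plan is to reduce everlasting security to \cref{lmm:OTS}, which already packages the key technical fact — that the coherent signature state is computationally indistinguishable from the corresponding classical mixture. The crucial observation is that once the adversary's copy of the quantum public key is replaced by the classical mixture $\tfrac12(\ketbra{0,\sigma_0}{0,\sigma_0} + \ketbra{1,\sigma_1}{1,\sigma_1})$, the secret phase bit $d_0$ has been traded for an \emph{unconditionally} random bit: after the adversary hands back a (possibly tampered) public-key register, the honest encryption algorithm projects with $\{\Pi,\mathsf{Id}-\Pi\}$, and conditioned on not aborting, the state lies in the span of the valid signature pairs. Since the mixture has no coherence between the $\ket{0,\cdot}$ and $\ket{1,\cdot}$ branches, a Hadamard-basis measurement of the surviving state yields $(d_1,d_2)$ whose first bit $d_1$ is uniformly random and, most importantly, independent of everything in the adversary's internal register. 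Hence $m \oplus d_1$ is a one-time pad of $m$, so the two experiments $\mathsf{Exp}^{\adv_\lambda}(1^\lambda,0)$ and $\mathsf{Exp}^{\adv_\lambda}(1^\lambda,1)$ produce \emph{identical} output states in this "mixture world."

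Concretely, I would argue as follows. First, I would note that the entire experiment $\mathsf{Exp}^{\adv_\lambda}(1^\lambda,m)$ — sampling $\sk$ and $\pk$, running $\adv_\lambda$, then applying the encryption map $\Enc(\cdot,\pk,m)$ — is an efficient quantum procedure that takes as input the quantum public key $\rho$ together with the classical side information $\pk=(\vk_0,\vk_1)$. By \cref{lmm:OTS}, replacing $\rho = \ketbra{\Psi}{\Psi}$ (with the uniformly random phase $d_0$) by the classical mixture changes the output state of this procedure by at most a negligible amount in trace distance; this holds for $m=0$ and $m=1$ separately. (One subtlety: $\Enc$ and $\Dec$ also need the phase bit $d_0$, but $d_0$ only appears as the sign in $\ket{\Psi}$, which is exactly what averaging over $d_0$ produces the mixture from, and $\Dec$ is not invoked in the experiment at all — only $\Enc(\cdot,\pk,m)$ is, and that uses only $\pk$. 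So the reduction is clean.)

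Second, in the mixture world I would compute the output state directly. The adversary receives $\tfrac12(\ketbra{0,\sigma_0}{0,\sigma_0}+\ketbra{1,\sigma_1}{1,\sigma_1})$ entangled with nothing (it is a fixed classical mixture), applies an arbitrary channel producing a public-key register and an internal register, and the challenger applies $\Pi$, measures in the Hadamard basis to get $(d_1,d_2)$, and outputs $(m\oplus d_1, d_2)$ together with the internal register. Because the incoming mixture is a classical mixture over the two computational-basis branches $b\in\{0,1\}$, and the adversary's channel acts on it, the joint post-measurement state can be analyzed branch-by-branch; writing $\Pi = \sum_{b,\sigma\in\Sigma_b}\proj{b,\sigma}$ and expanding the Hadamard measurement, the bit $d_1$ comes out uniform and independent of $d_2$ and of the adversary's internal register, in each branch and hence overall. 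Therefore $(m\oplus d_1,d_2,\text{internal})$ has a distribution independent of $m$, i.e.\ the $m=0$ and $m=1$ output states coincide exactly. Combining with the previous paragraph via the triangle inequality gives $\mathsf{Td}(\mathsf{Exp}^{\adv_\lambda}(1^\lambda,0),\mathsf{Exp}^{\adv_\lambda}(1^\lambda,1)) = \negl(\lambda)$, as required.

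The main obstacle is the second step: making precise that $d_1$ is genuinely independent of the adversary's internal register in the mixture world, even though the adversary tampered arbitrarily. The point to nail down is that the mixture $\tfrac12(\proj{0,\sigma_0}+\proj{1,\sigma_1})$ carries no purifying coherence that the adversary could exploit to correlate its register with the phase read out by the Hadamard measurement — formally, one should purify the mixture with a classical flag register $\mathsf b$ held by the challenger, observe the adversary's channel commutes with the dephasing on $\mathsf b$, and then check that after $\Pi$ and the Hadamard measurement the outcome $d_1$ is uniform conditioned on $(\mathsf b, d_2, \text{internal})$. This is a short but careful computation; everything else is bookkeeping and an appeal to \cref{lmm:OTS}.
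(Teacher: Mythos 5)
Your proposal has two genuine problems, one of which is fatal as written. First, the bridging step via \cref{lmm:OTS} does not do what you need: that lemma gives \emph{computational} indistinguishability of the coherent signature state from the mixture, and pushing computationally indistinguishable inputs through the (efficient) experiment only yields computationally indistinguishable outputs --- it does \emph{not} bound the trace distance of the two output states, which is what everlasting security demands (the final distinguisher is unbounded). The switch to the mixture must instead be argued statistically, and here it is in fact \emph{exact}: since $d_0$ is uniform and is never used in the experiment (only $\Enc(\cdot,\pk,m)$ is applied, and $\sk$ is discarded), averaging over $d_0$ turns $\ket{\Psi}$ into precisely the mixture $\tfrac12(\ketbra{0,\sigma_0}{0,\sigma_0}+\ketbra{1,\sigma_1}{1,\sigma_1})$ --- a Pauli-$\mathsf{Z}$-twirl identity. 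Your parenthetical gestures at exactly this, but your stated justification is the wrong tool; with the exact argument, \cref{lmm:OTS} is not needed at all for this theorem (it is the right tool for the computational-security scheme of \cref{thm:comp}, where the phase is fixed rather than random).

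The fatal gap is in your second step: in the ``mixture world'' the outputs for $m=0$ and $m=1$ do \emph{not} coincide exactly, and $d_1$ is \emph{not} unconditionally uniform and independent of the internal register after the projection $\Pi$. Concretely, an adversary who receives $\ket{0,\sigma_0}$ and can forge a valid signature $\sigma_1'$ on $1$ under $\vk_1$ may hand back $(\ket{0,\sigma_0}+\ket{1,\sigma_1'})/\sqrt{2}$ while keeping $\sigma_0\oplus\sigma_1'$ in its internal register; this passes $\Pi$ with certainty, and the Hadamard outcome then satisfies $d_1 = d_2\cdot(\sigma_0\oplus\sigma_1')$, so $\ct=(m\oplus d_1,d_2)$ together with the internal register reveals $m$. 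So no branch-by-branch dephasing computation can establish your claimed independence unconditionally; the computational assumption must enter precisely here. This is the role of the paper's \cref{claim:0}: by a reduction to \emph{strong existential unforgeability} (measuring the post-projection state in the computational basis would otherwise produce a fresh valid message--signature pair), the renormalized state $\Pi\rho^*\Pi/\mathsf{Tr}(\Pi\rho^*)$ is negligibly close to the known pure state $\ket{c,\sigma_c}$; because it is (close to) pure, it is nearly in tensor product with the adversary's internal register, and only then does the Hadamard measurement yield a uniform $(d_1,d_2)$ independent of that register, making $\ct$ independent of $m$ up to negligible error. Your proposal omits this reduction entirely, so as it stands the argument would ``prove'' security even for a forgeable signature scheme, which is false.
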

\begin{proof}
We proceed by defining a series of hybrid experiments that we show to be indistinguishable from the eyes of any (possibly unbounded) algorithm.
For convenience, we define \[\mathsf{Adv}(i) = \mathsf{Td}\left(\mathsf{Hyb}_i^{\adv_\lambda}(1^\lambda, 0), \mathsf{Hyb}_i^{\adv_\lambda}(1^\lambda, 1)\right).\]
    \begin{itemize}
        \item $\mathsf{Hyb}_0^{\adv_\lambda}(1^\lambda, b)$: This is the original experiment $\mathsf{Exp}^{\adv_\lambda}(1^\lambda, b)$, as defined in~\Cref{def:strong_ever}.
        \item $\mathsf{Hyb}_1^{\adv_\lambda}(1^\lambda, b)$: In this experiment, we modify the $\PKGen$ algorithm to measure the state $\ket{\Psi}$ in the computational basis, before outputting $\rho$.
    \end{itemize}
Since the result $\sk$ of the $\SKGen$ algorithm is not used in the experiment, we only need to argue that the reduced states of $(\pk,\rho)$ are unchanged by this modification.
This is indeed the case, since adding a random phase is equivalent to measuring in the computational basis by a standard Pauli $\mathsf{Z}$-twirl argument. Thus the two experiments are identical from the perspective of the adversary and therefore $\mathsf{Adv}(0) = \mathsf{Adv}(1)$.
    \begin{itemize}
        \item $\mathsf{Hyb}_2^{\adv_\lambda}(1^\lambda, b)$: In this experiment, we further modify the $\PKGen$ algorithm to sample the state~$\rho$ as follows.
        Flip a random coin $c\gets \{0,1\}$.
        If $c=0$ then return~$\ketbra{0,\sigma_0}{0,\sigma_0}$, and otherwise return~$\ketbra{1,\sigma_1}{1,\sigma_1}$.
    \end{itemize}
    Observe that the state $\rho$ returned by the modified $\PKGen$ algorithm is the classical mixture
    \[
\rho = \frac{\ketbra{0,\sigma_0}{0,\sigma_0} + \ketbra{1,\sigma_1}{1,\sigma_1}}{2}
    \]
    which is identical to the state returned in the previous hybrid.
    Therefore $\mathsf{Adv}(1) = \mathsf{Adv}(2)$.
    Next, let us denote by~$\rho^*$ the reduced density matrix of the modified public-key register returned by the adversary in step~2 of the experiment.
    We will assume that the state~$\rho^*$ returned by the adversary is such that the encryption algorithm accepts (i.e., does not abort) with non-negligible probability (for otherwise $\mathsf{Adv}(2)=\negl(\lambda)$ and we are done).
    We can then establish that the state~$\rho^*$, once projected onto the image of $\Pi$, must be negligibly close in trace distance from the state produced by the $\PKGen$ algorithm.
    \begin{claim}\label{claim:0}
        There exists a negligible function $\negl$ such that{, for $c\in\{0,1\}$ the result of the coin toss in the $\Gen$ algorithm,}
    \[
    \mathsf{Td} \left(\frac{\Pi\rho^*\Pi}{\mathsf{Tr}\left(\Pi\rho^*\right)} , \ketbra{c, \sigma_c}{c, \sigma_c}\right) = \negl(\lambda).
    \]
    \end{claim}
    \begin{proof}[Proof of \cref{claim:0}]
The proof follows by a reduction to the unforgeability of the OTS.
{Indeed, assume for sake of contradiction that the post-measurement state is non-negligibly far from $\ket{c,\sigma_c}$ in trace distance.
By \cref{eq:fuchs}, the latter is equivalent to saying that if we measure in the computational basis then the probability of obtaining outcome $(c,\sigma_c)$ is non-negligibly smaller than~1.
Since the post-measurement state is supported on range of~$\Pi$, it follows that if we measure in the computational basis then we must with non-negligible probability obtain an outcome~$(z,\sigma)$ such that
\begin{align*}
  (z,\sigma) \neq (c,\sigma_{c}) \quad\text{and}\quad \Ver(\vk_z     , z, \sigma) = 1,
\end{align*}
that is, $(z,\sigma) \neq (c, \sigma_c)$ is a valid message-signature pair.
As the adversary along with the projective measurement~$\{\Pi,I-\Pi\}$ and the standard basis measurement run in quantum polynomial time, and the projective measurement returns~$\Pi$ with non-negligible probability, this contradicts the strong existential unforgeability of the OTS scheme.}
\end{proof}
\noindent
We conclude by establishing that the message $m$ is statistically hidden in the last experiment.
\begin{claim}\label{claim:00}
There exists a negligible function $\negl$ such that $\mathsf{Adv}(2) = \negl(\lambda)$.
\end{claim}
\begin{proof}[Proof of \cref{claim:00}]
{By~\Cref{claim:0}, the post-measurement state is negligibly close to the state $\ket{c, \sigma_c}$.
As the latter is a pure state and extensions of pure states are always in tensor product, by Uhlmann's theorem it follows that the post-measurement register is negligibly close to being in tensor product with the internal register of the adversary.
Thus it suffices to show that the output distribution of the~$\Enc$ algorithm does not depend on the message~$m$ when given as input~$\ket{c, \sigma_c}$.}
To see this, observe that the rotated state in the Hadamard basis is up to overall normalization given by
\[
\mathsf{H} \ket{c, \sigma_c} \propto \sum_{d'}(-1)^{d'\cdot(c, \sigma_c)} \ket{d'},
\]
and therefore a measurement returns a uniformly random bit string~$d' = (d_1, d_2)$.
{In particular, $\ct = (m \oplus d_1, d_2)$ has the same distribution for~$m\in\{0,1\}$.
As explained above, it is also negligibly close to being independent from the internal register of the adversary, and thus the claim follows.}
\end{proof}
\noindent By~\Cref{claim:00} we have that
\[
\mathsf{Adv}(0) = \mathsf{Adv}(1) = \mathsf{Adv}(2) = \negl(\lambda),
\]
and this concludes the proof of \cref{thm:main}.
\end{proof}

\subsection{Computational Secure QPKE}

Our scheme assumes the existence of a {quantum-secure} strongly existentially unforgeable one-time signature scheme $(\SGen, \Sign, \Ver)$ and a quantum-secure pseudorandom function $\PRF$. Both such building blocks can be constructed assuming any quantum-secure one-way function.

\begin{itemize}
\item $\SKGen(1^{\lambda})$:
\begin{itemize}
\item Sample a key $k \gets \{0,1\}^\lambda$ and set $\sk = k$.
\end{itemize}

\item $\PKGen(\sk)$:
\begin{itemize}
\item Sample two uniform $(r_0, r_1) \gets \{0,1\}^\lambda$ and compute 
\[
(\sk_{0}, \vk_{0}) \gets \SGen(1^\lambda; \PRF(k,r_0)) \quad \text{ and } \quad (\sk_{1}, \vk_{1}) \gets \SGen(1^\lambda; \PRF(k,r_1)).
\]
\item Compute $\sigma_{0} \gets \Sign(\sk_{0}, 0)$ and $\sigma_{1} \gets \Sign(\sk_{1}, 1)$.
\item Define the state \[\ket{\Psi} = \frac{\ket{0,\sigma_{0}} + \ket{1,\sigma_{1}}}{\sqrt{2}}.\]
This state is efficiently computable by preparing an EPR pair and CNOT-ing the bits of the signatures into an auxiliary register, controlled on the value of the first qubit.
\item Set the quantum part of the public key~$\rho$ to be the state $\ket{\Psi}$ and set the classical part of the public key and the classical secret key to  $\pk = (\vk_{0}, \vk_{1}, r_0, r_1)$.
\end{itemize}

\item $\Enc(\rho, \pk, m)$:
\begin{itemize}
\item Project $\rho$ onto the subspace of valid signatures of $0$ and $1$, under $\vk_{0}$ and $\vk_{1}$, respectively.
More precisely, denote by~$\Sigma_{0}$ and~$\Sigma_{1}$ the set of accepting signatures on~$0$ and~$1$, under~$\vk_{0}$ and~$\vk_{1}$, respectively, and consider the projector
        \[
        \Pi = \sum_{\sigma \in \Sigma_{0}} \proj{0,\sigma} + \sum_{\sigma \in \Sigma_{1}} \proj{1,\sigma}.
        \]
        Apply the projective measurement~$\{\Pi, \mathsf{Id} - \Pi\}$, and abort the execution {(return~$\perp$)} if the measurement returns the second outcome.
        Note that this measurement can be implemented efficiently by running the verification algorithm coherently, CNOT-ing the output bit on a separate register and measuring it.
        \item Apply the $\mathsf{Z}^m$ operator to the first qubit of $\rho$, classically controlled on the message $m$.
        \item Set $\ct$ to be the residual state, along with $(r_0, r_1)$.
\end{itemize}

\item $\Dec(\sk, \ct)$:
\begin{itemize}
    \item Use the secret key $k$ to recompute 
\[
(\sk_{0}, \vk_{0}) \gets \SGen(1^\lambda; \PRF(k,r_0)) \quad \text{ and } \quad (\sk_{1}, \vk_{1}) \gets \SGen(1^\lambda; \PRF(k,r_1)).
\]
along with $\sigma_{0} \gets \Sign(\sk_{0}, 0)$ and $\sigma_{1} \gets \Sign(\sk_{1}, 1)$.
\item Measure the quantum state of $\ct$ in the 
\[
\left\{\frac{\ket{0,  \sigma_0} + \ket{1,  \sigma_1}}{\sqrt{2}}, \frac{\ket{0,  \sigma_0} - \ket{1,  \sigma_1}}{\sqrt{2}}\right\}
\]
basis. And return the corresponding outcome.
\end{itemize}
\end{itemize}

\paragraph{Analysis.} To see why the scheme satisfies correctness, first observe that the state $\ket{\Psi}$ as defined in the $\PKGen$ algorithm lies in the image of the projector $\Pi$. Therefore the projective measurement $\{\Pi, \mathsf{Id} - \Pi\}$ acts as the identity on $\ket{\Psi}$. Then, the state output by the encryption algorithm corresponds to
\[
(\mathsf{Z}^{m} \otimes \mathsf{Id}) \frac{\ket{0,\sigma_{0}} + \ket{1,\sigma_{1}}}{\sqrt{2}}=\frac{\ket{0,\sigma_{0}} + (-1)^m\ket{1,\sigma_{1}}}{\sqrt{2}}.
\]
Therefore, the output of the decryption algorithm equals $m$ with certainty. Next, we show that the scheme is computationally secure.

\begin{theorem}[Computational security]\label{thm:comp}
If quantum-secure one-way functions exist, then the QPKE $(\SKGen, \PKGen, \Enc,\Dec)$ satisfies computational security.
\end{theorem}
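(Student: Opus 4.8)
The plan is to prove computational security by a sequence of hybrid experiments, in the same spirit as the proof of \cref{thm:main}, but with two new twists: we must first ``derandomize'' the pseudorandom function so that the $n$ copies of the public key use \emph{independent} one-time signature keys, and the message is now encoded directly into the residual quantum state (via $\mathsf{Z}^m$ on the first qubit) rather than into a classical one-time pad. I would write $\mathsf{Hyb}_i^{\adv_\lambda}(1^\lambda,m,n)$ for the hybrids, with $\mathsf{Hyb}_0^{\adv_\lambda}(1^\lambda,m,n) = \mathsf{Exp}^{\adv_\lambda}(1^\lambda,m,n)$; the goal is $\mathsf{Hyb}_0(0,n) \approx \mathsf{Hyb}_0(1,n)$.

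In $\mathsf{Hyb}_1$ I would replace the PRF inside $\PKGen$ by a truly random function $f$. Since the secret key $k$ is used only as a PRF key and never afterwards (the experiment never runs $\Dec$), $\mathsf{Hyb}_0(m,n)\approx\mathsf{Hyb}_1(m,n)$ follows from a direct reduction to the security of $\PRF$, for both $m\in\{0,1\}$. Now the $2n$ seeds $r_b^{(i)}$ are i.i.d.\ uniform, hence pairwise distinct except with probability $O(n^2/2^\lambda)$; conditioning on this event (a negligible statistical loss) makes the $n$ public keys use i.i.d.\ OTS key pairs. In $\mathsf{Hyb}_2$ I would replace the quantum part $\ket\Psi = (\ket{0,\sigma_0^{(i)}}+\ket{1,\sigma_1^{(i)}})/\sqrt2$ of each public key by the classical mixture $(\proj{0,\sigma_0^{(i)}}+\proj{1,\sigma_1^{(i)}})/2$. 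Because the OTS key pairs are now independent, this is done one copy at a time, each step being an invocation of \cref{lmm:OTS}: the reduction receives the challenge state and $(\vk_0^{(i)},\vk_1^{(i)})$, samples fresh seeds $(r_0^{(i)},r_1^{(i)})$, generates all other copies itself, runs $\adv_\lambda$, and then runs $\Enc(\cdot,\pk_1,m)$ on the returned register --- which it can do because it knows every relevant verification key. Hence $\mathsf{Hyb}_1(m,n)\approx\mathsf{Hyb}_2(m,n)$. It is convenient to view $\mathsf{Hyb}_2$ equivalently as $\mathsf{Hyb}_3$, where $\PKGen$ instead samples a uniform bit $c_i\gets\{0,1\}$ and outputs $\proj{c_i,\sigma_{c_i}^{(i)}}$, computing only $\sigma_{c_i}^{(i)}$; this has identical output, but now the signature $\sigma_{1-c_i}^{(i)}$ is genuinely never produced.

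It then remains to show $\mathsf{Td}(\mathsf{Hyb}_3(0,n),\mathsf{Hyb}_3(1,n))=\negl(\lambda)$, which closes the chain. Let $\Theta$ be the joint state of the adversary's returned public-key and internal registers, let $\rho^*$ be the reduced state of the public-key register, write $\Pi=\Pi_0+\Pi_1$ with $\Pi_z=\sum_{\sigma\in\Sigma_z}\proj{z,\sigma}$, and note that $\mathsf{Z}$ on the first qubit acts as $\Pi_0-\Pi_1$ on the range of $\Pi$. The two hybrids differ only by whether the accepted branch $(\Pi\otimes\mathsf{Id})\Theta(\Pi\otimes\mathsf{Id})$ is conjugated by this $\mathsf{Z}$, so a short computation (splitting the accepted branch into $\Pi$-blocks and using $\|\Pi_0 X\Pi_1\|_1\le\sqrt{\mathsf{Tr}(\Pi_0 X)\,\mathsf{Tr}(\Pi_1 X)}$) bounds $\mathsf{Td}(\mathsf{Hyb}_3(0,n),\mathsf{Hyb}_3(1,n))$ by $2\sqrt{\mathbb{E}[\mathsf{Tr}(\Pi_{1-c_1}\rho^*)]}$, the expectation being over the experiment's classical randomness. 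If this quantity were non-negligible, then with non-negligible probability measuring $\rho^*$ in the computational basis would yield a pair $(1-c_1,\sigma)$ with $\Ver(\vk_{1-c_1}^{(1)},1-c_1,\sigma)=1$: a valid forgery under the position-$(1-c_1)$ key of the first copy, for which no signature on $1-c_1$ was ever issued. This yields a QPT reduction to strong existential unforgeability --- it plants the challenge verification key in that position (equivalently, it fixes $c_1$ to the complementary bit), simulates the other $n-1$ copies and the rest of $\Enc$ honestly, and outputs the measured pair --- contradicting the assumption.

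I expect this last step to be the main obstacle: one has to turn ``the two accepted branches are far'' into ``the returned register has non-negligible weight in the wrong $\Pi$-block'', ensure the planted key never receives a signature on the relevant message (this is exactly why passing through $\mathsf{Hyb}_3$ matters), faithfully simulate the remaining copies and the internal register, and account for the guess of which direction leaks (a constant-factor loss). Everything else --- the PRF step, the per-copy applications of \cref{lmm:OTS}, and the block-decomposition bound --- is routine.
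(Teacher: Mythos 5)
Your proposal is correct, and its skeleton matches the paper's proof: first replace the PRF by true randomness (the paper uses lazy sampling, you use a random function plus a negligible conditioning on distinct seeds, which is if anything more careful about independence of the copies), then replace the superposed signature state by the classical mixture via \cref{lmm:OTS}, and finally reduce the residual distinguishing advantage to unforgeability. Two differences are worth noting. First, you switch all $n$ copies to mixtures one at a time, whereas the paper only modifies the \emph{first} invocation of $\PKGen$ (the only one whose keys enter $\Enc$); your extra hybrids are harmless but unnecessary. Second, and more substantively, your concluding step is a different argument from the paper's: the paper appeals to \cref{claim:0} from the proof of \cref{thm:main} (the accepted, renormalized state is negligibly close to $\ket{c,\sigma_c}$) and then observes that $\mathsf{Z}^m$ acts on a basis state as a global phase, while you bound the trace distance of the two experiment outputs directly via the block decomposition $\norm{\Pi_0 A \Pi_1}_1 \le \sqrt{\mathsf{Tr}(\Pi_0 A)\,\mathsf{Tr}(\Pi_1 A)}$, obtaining the explicit bound $2\sqrt{\mathbb{E}[\mathsf{Tr}(\Pi_{1-c_1}\rho^*)]}$, and then turn weight in the wrong block into a forgery on a message for which no signature was ever issued. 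Your route is more quantitative and self-contained: it needs only plain (not strong) unforgeability, and it sidesteps the paper's implicit conditioning on non-negligible acceptance probability as well as the Uhlmann/tensor-product reasoning needed to lift closeness of the projected state to closeness of the full outputs including the adversary's internal register. The paper's route is shorter because it reuses \cref{claim:0} wholesale. Both are sound.
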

\begin{proof}
We proceed by defining a series of hybrid experiments that we show to be indistinguishable from the eyes of any QPT algorithm.
\begin{itemize}
\item $\mathsf{Hyb}_0^{\adv_\lambda}(1^\lambda, b)$: This is the original experiment $\mathsf{Exp}^{\adv_\lambda}(1^\lambda, b, n)$, as defined in~\Cref{def:strong_comp}.
\item $\mathsf{Hyb}_1^{\adv_\lambda}(1^\lambda, b)$: In this experiment, we simulate the output of the PRF by lazy sampling, i.e., every time that the $\PKGen$ algorithm is invoked, the experiment sample a uniform tuple $(r_0, r_1, \tilde{r}_0, \tilde{r}_1)$ and uses the latter pair as the randomness for the OTS scheme. To keep things consistent, the experiment maintains a list of all such tuples.
\end{itemize}
Note that the only difference between these two hybrids is in the way the random coins of the OTS are sampled. Therefore, by the pseudorandomness of $\PRF$, the two hybrids are computationally indistinguishable. Note that, in the second hybrid, different copies of the public key are now independent from each other.
\begin{itemize}
\item $\mathsf{Hyb}_2^{\adv_\lambda}(1^\lambda, b)$: In this experiment, we further modify the \emph{first invocation} of the $\PKGen$ algorithm to sample the state~$\rho$ as follows. Flip a random coin $c\gets \{0,1\}$. If $c=0$ then return~$\ketbra{0,\sigma_0}{0,\sigma_0}$, and otherwise return~$\ketbra{1,\sigma_1}{1,\sigma_1}$.
\end{itemize}
    Observe that the state $\rho$ returned by the modified $\PKGen$ algorithm is the classical mixture
    \[
\rho = \frac{\ketbra{0,\sigma_0}{0,\sigma_0} + \ketbra{1,\sigma_1}{1,\sigma_1}}{2}.
    \]
By a direct application of~\Cref{lmm:OTS}, we can conclude that the two hybrids are computationally indistinguishable. At this point, we can appeal to~\Cref{claim:0} (in the proof of~\Cref{thm:main}) to establish that the state $\rho^*$, as returned by the adversary in the security experiment, must be within negligible trace distance from a basis state. The proof is concluded by noting that the distributions of an encryption of $0$ and an encryption of $1$ are identical, up to a global phase, if the algorithm is called on input any basis state.
\end{proof}

\section{Two-Message Quantum Key Distribution}\label{sec:qkd}

In the following we outline how to use a QPKE scheme $(\SKGen, \PKGen, \Enc, \Dec)$ to construct a QKD protocol with a minimal number of two rounds of interaction, as announced in the introduction.

\subsection{Definitions}
We give a formal definition of quantum key distribution in the everlasting settings, i.e., where an attacker is required to be computationally bounded only during the execution of the protocol. For convenience, we adopt a syntax specific for two-message protocols, but the definitions can be extended to the more general interactive settings canonically.

\begin{definition}[Two-Message QKD]\label{def:2mqkd}
 A \emph{quantum key distribution (QKD)} scheme is defined as a tuple of algorithms~$(\QKDFirst, \QKDSecond, \QKDDecode)$ such that:
\begin{itemize}
\item $(\msg, \mu, \st) \gets \QKDFirst(1^{\lambda})$: A QPT algorithm which, on input the security parameter $1^{\lambda}$ outputs a message, consisting of a classical component $\msg$ and a (possibly mixed) quantum state $\mu$, and an internal state $\st$.
\item $\{(\resp, \eta, k), \bot\} \gets \QKDSecond(\msg, \mu)$: A QPT algorithm which, on input the first message $(\msg, \mu)$, outputs a response, consisting of a classical component $\resp$ and a (possibly mixed) quantum state $\eta$, along with a key $k \in\{0,1\}^\lambda$, or a distinguished symbol $\bot$, denoting rejection.
\item $\{k, \bot\} \gets \QKDDecode(\st, \resp, \eta)$: A QPT algorithm which, on input the internal state $\st$, and the response $(\resp, \eta)$, returns a key $k \in\{0,1\}^\lambda$ or a distinguished symbol $\bot$, denoting rejection.
\end{itemize}
\end{definition}
We say that a QKD scheme $(\QKDFirst, \QKDSecond, \QKDDecode)$ satisfies \emph{correctness} if for all $\lambda\in \N$ it holds that:
\begin{align*}
    \Pr\left[ \bot = \QKDSecond(\msg, \mu) : 
         (\msg, \mu, \st) \gets \QKDFirst(1^{\lambda})
     \right] = 0 
\end{align*}
and
\begin{align*}
    \Pr\left[ k = \QKDDecode(\st, \resp, \eta) : 
    \begin{array}{l}
         (\msg, \mu, \st) \gets \QKDFirst(1^{\lambda}) ;  \\
         (\resp, \eta, k) \gets \QKDSecond(\msg, \mu) 
    \end{array}
     \right] = 1.
\end{align*}

\paragraph{Everlasting Security.} Next, we define the security notion of \emph{everlasting security} for QKD, which consists of two properties. Privacy requires that the key $k$ should be hidden unconditionally in the presence of an adversary that is computationally bounded during the execution of the protocol. In addition, as standard for QKD, we assume the existence of an authenticated classical channel, which is modeled by not allowing the adversary to tamper with the classical messages. On the other hand, verifiability requires that no computationally bounded adversary should be able to cause Alice and Bob to disagree on the key, without any of them noticing.

\begin{definition}[Everlasting Security]\label{def:qkd_ever}
For a family of QPT algorithms $\{\adv_\lambda \}_{\lambda\in\N}$, we define the experiment $\mathsf{QKDSec}^{\adv_\lambda}(1^\lambda)$ as follows:
\begin{enumerate}
\item Sample $(\msg, \mu, \st) \gets \QKDFirst(1^{\lambda})$ and send the corresponding first message $(\msg, \mu)$ to $\adv_\lambda$.
\item $\adv_\lambda$ returns two quantum registers. The first register is parsed as the modified first message, whereas the second register is arbitrary and will be referred to as the adversary's internal~register.
\item Compute $\{(\resp, \eta, k_0), \bot\}$ by applying the map defined by $\QKDSecond(\msg, \cdot)$ to the modified first message register returned by the adversary in the previous round. 
\begin{enumerate}
\item If the above message is $\bot$, set $(k_0, k_1) = (\bot, \bot)$ and conclude the experiment.
\item Otherwise, return $(\resp, \eta)$ to the adversary, along with its internal state.
\end{enumerate}
\item $\adv_\lambda$ returns once again two quantum registers. The first register is parsed as the modified response, whereas the second register is the adversary's internal~register.
\item Compute $\{k_1, \bot\}$ by applying the map defined by $\QKDDecode(\st, \resp, \cdot)$  to the modified response register returned by the adversary in the previous round. If the result is $\bot$, then set $k_1 = \bot$.
\item The output of the experiment is defined to be the internal register of the adversary.
\end{enumerate}
Then we say that a QKD scheme $(\QKDFirst, \QKDSecond, \QKDDecode)$ satisfies \emph{everlasting security} if the following properties hold.
\begin{itemize}
    \item (Privacy) There exists a negligible function $\negl$ such that for all $\lambda\in \N$ and all QPT $\adv_\lambda$ it holds that
\[
\mathsf{Td}\left(\left\{\mathsf{QKDSec}^{\adv_\lambda}(1^\lambda), k_0, k_1\right\}, \left\{\mathsf{QKDSec}^{\adv_\lambda}(1^\lambda), \tilde k_0, \tilde k_1\right\}\right) = \negl(\lambda)
\]
where the variables $k_0$ and $k_1$ are defined in the experiment, whereas $\tilde k_b$, for $b\in\{0,1\}$, is defined as
\[
\begin{cases}
			\tilde k_b = \bot & \text{if $k_b=\bot$}\\
            \tilde k_b \gets \{0,1\}^\lambda & \text{otherwise.}
		 \end{cases}
\]
    \item (Verifiability) There exists a negligible function $\negl$ such that for all $\lambda\in \N$ and all QPT $\adv_\lambda$ it holds that
    \[
    \Pr[k_0 \neq k_1 \text{ and } k_1 \neq \bot] = \negl(\lambda)
    \]
    where $k_0$ and $k_1$ are defined in the experiment.
\end{itemize}
\end{definition}

%
Note that the above definition of verifiability is tight for two message protocols: An adversary can easily cause a disagreement between the keys by doing nothing on the first round, and blocking the second message. In this case $k_0$ would be a valid key (by correctness), whereas $k_1$ would be set to $\bot$, since the second message was never delivered.

\subsection{Two-Message QKD from QPKE}

We are now ready to describe our QKD protocol. Our ingredients are a QPKE scheme $(\SKGen, \allowbreak \PKGen, \allowbreak \Enc, \allowbreak \Dec)$ and a OTS scheme $(\SGen, \Sign, \Ver)$, which can be both constructed from one-way functions. Additionally, we will use a universal hash function family
\[
\mathsf{Hash}: \{0,1\}^{4\lambda} \to \{0,1\}^\lambda
\]
which exist unconditionally. For convenience, we denote by $s(\lambda)$ the size of a signature for a message of size $\lambda$. We present the protocol below.

\begin{itemize}
    \item $\QKDFirst(1^{\lambda})$:
    \begin{itemize}
        \item For all $i = 1,\dots, 4\lambda + s(4\lambda)$ sample a QPK key pair
        \[\sk_i \gets \SKGen(1^{\lambda})\quad \text{ and }\quad ({\rho}_i, \pk_i) \gets \PKGen(\sk_i).\]
        \item Set the first message to $\msg = (\pk_1, \dots, \pk_{4\lambda + s(4\lambda)})$ and $\mu = \rho_1 \otimes \dots \otimes \rho_{4\lambda + s(4\lambda)}$.
    \end{itemize}
    \item $\QKDSecond(\msg, \mu)$:
    \begin{itemize}
        \item Sample a OTS key pair $(\vk, \mathsf{zk}) \gets \SGen(1^\lambda)$, a key $k \gets \{0,1\}^{4\lambda}$, and a universal hash function $\mathsf{Hash}$.
        \item Compute $\sigma \gets \Sign(\mathsf{zk}, k)$.
        \item For all $i = 1,\dots, 4\lambda + s(4\lambda)$ compute
        \[
        \left\{\ct_i \gets \Enc({\rho}_i, \pk_i, k_i)\right\}_{i \leq 4\lambda} \quad\text{ and } \quad \left\{ \ct_{i} \gets \Enc({\rho}_{i}, \pk_{i}, \sigma_i)\right\}_{i> 4\lambda}
        \]
        where $k = (k_1, \dots, k_{4\lambda})$ and $\sigma = (\sigma_1, \dots, \sigma_{s(4\lambda)})$.
        \item If any of the encryption procedures fails, return $\bot$.
        \item Else, set the response as $\resp = (\mathsf{Hash}, \vk, \ct_1, \dots, \ct_{4\lambda + s(4\lambda)})$, no quantum state is present.
        \item Set the key to $K = \mathsf{Hash}(k)$.
    \end{itemize}
    \item $\QKDDecode(\st, \resp)$: 
    \begin{itemize}
        \item For all $i = 1,\dots, 4\lambda + s(4\lambda)$ compute
        \[
        \left\{k_i \gets \Dec(\sk_i, \ct_i) \right\}_{i\leq 4\lambda}\quad\text{ and }\quad \left\{\sigma_i \gets \Dec({\sk}_{\lambda + i}, \ct_{\lambda + i})\right\}_{i > 4\lambda}.
        \]
        \item If $\Ver(\vk, k, \sigma) \neq 1$ return $\bot$.
        \item Else, return $K = \mathsf{Hash}(k)$.
    \end{itemize}
\end{itemize}
Correctness follows immediately from the correctness of the underlying building blocks. Next we prove that the scheme is private and verifiable.
\begin{theorem}
    If quantum-secure one-way functions exist, then the QKD 
$(\QKDFirst, \QKDSecond, \allowbreak\QKDDecode)$ satisfies everlasting security (privacy and verifiability).
\end{theorem}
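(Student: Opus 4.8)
The plan is to prove the two properties of everlasting security separately: \emph{verifiability} from the unforgeability of the one-time signature (OTS) scheme, and \emph{privacy} from the everlasting security of the underlying QPKE together with the leftover hash lemma (\cref{lemma:lhl}). It will be convenient throughout that Bob's response $\resp$ is purely classical, so the adversary's second round of tampering acts on a classical object and has no effect; the only meaningful tampering is on the quantum first message $\mu$, and it is committed \emph{before} Bob samples $k$, $\sigma$ and $\vk$, and without the adversary ever seeing $k$ or $\sigma$ in the clear.

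\emph{Verifiability.} A violation requires $k_1\neq\bot$ and $k_1\neq k_0$. Writing $k'$ for the string Alice reconstructs by decrypting $\ct_1,\dots,\ct_{4\lambda}$ and $\sigma'$ for the decryption of $\ct_{4\lambda+1},\dots,\ct_{4\lambda+s(4\lambda)}$, a violation forces $\Ver(\vk,k',\sigma')=1$ together with $k'\neq k$ (if $k'=k$ then $k_1=\mathsf{Hash}(k)=k_0$, so there is nothing to break). I would reduce this to the strong existential unforgeability of $(\SGen,\Sign,\Ver)$ exactly as in \cref{def:unforge}, with $m:=k$ a uniformly chosen message: the reduction receives $(\vk,k,\sigma)$, runs $\QKDFirst$ itself so that it holds every QPKE secret key $\sk_i$, forwards $(\msg,\mu)$ to the adversary, receives the tampered first message, and then faithfully simulates $\QKDSecond$ (forming each $\ct_i$ by running $\Enc$ on the appropriate tampered sub-register, using $k$ and $\sigma$), the classical second round, and $\QKDDecode$. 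It outputs $(k',\sigma')$ whenever $k'\neq k$ and $\Ver(\vk,k',\sigma')=1$; since $k'\neq k$ this is a legitimate forgery on a message distinct from the signed one, which therefore occurs with negligible probability.

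\emph{Privacy.} We must show $\mathsf{Td}(\{\mathsf{QKDSec}^{\mathcal A},k_0,k_1\},\{\mathsf{QKDSec}^{\mathcal A},\tilde k_0,\tilde k_1\})=\negl(\lambda)$. If $\QKDSecond$ aborts both sides equal $(\bot,\bot)$, so assume it does not abort with non-negligible probability. By verifiability, replacing $k_1$ by $\bot$ when Alice rejects and by $k_0$ when Alice accepts changes the output by at most $\negl(\lambda)$ in trace distance; after this substitution the output is a fixed function of $(\mathsf{QKDSec}^{\mathcal A},e,k_0)$, where $e\in\{0,1\}$ is Alice's accept bit, so it suffices to prove $\mathsf{Td}((\mathsf{QKDSec}^{\mathcal A},e,k_0),(\mathsf{QKDSec}^{\mathcal A},e,U))=\negl(\lambda)$ for $U\gets\{0,1\}^\lambda$. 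I would then run a hybrid over the $N=4\lambda+s(4\lambda)$ QPKE instances, replacing $\ct_i$ one at a time by an encryption of $0$; each step reduces to the everlasting security of the $i$-th instance, with the reduction embedding the challenge public key in position $i$, generating all other instances (and $k$, $\sigma$, $\vk$, $\mathsf{Hash}$) itself, extracting the $i$-th sub-register of the tampered first message to hand to the QPKE challenger, and simulating the rest of the experiment — which is legitimate by monotonicity of trace distance, as the remainder is just post-processing of the QPKE experiment's output. In the final hybrid every ciphertext, together with $\vk$, $\mathsf{Hash}$, and all QPKE decryption keys, is independent of $k$, and hence so is the pair $(\mathsf{QKDSec}^{\mathcal A},e)$; since $k\gets\{0,1\}^{4\lambda}$ is now uniform and independent, \cref{lemma:lhl} (with $\ell=\lambda\le 4\lambda-2\log(1/\varepsilon)$ for $\varepsilon=2^{-\lambda}$) gives that $\mathsf{Hash}(k)$ is $2^{-\Omega(\lambda)}$-close to uniform conditioned on this view, and reassembling $(k_0,k_1)$ from $(e,k_0)$ completes the argument.

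\emph{The main obstacle.} The delicate step is the QPKE hybrid: the distinguisher is given $k_1$, hence (through $e$) Alice's decryptions $\Dec(\sk_i,\ct_i)$ of the possibly tampered ciphertexts, so when $\ct_i$ is replaced the reduction must still be able to produce Alice's accept bit even though it does not hold the challenge secret key $\sk_i$. I expect the fix is to observe that for our construction everlasting security \emph{extends} to a variant of the experiment of \cref{def:strong_ever} whose output additionally includes $\Dec(\sk_i,\ct_i)$: in the hybrids of the proof of \cref{thm:main} the public-key state becomes a standard-basis state in which the one-time pad $d_0$ (a component of $\sk_i$) no longer appears, so $\Dec(\sk_i,\ct_i)=d_0\oplus(\text{a function of }\ct_i\text{ and the rest of }\sk_i)$ is a uniform bit statistically independent of everything else, and appending it does not affect indistinguishability. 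With this strengthened guarantee the QPKE challenger also returns $\Dec(\sk_i,\ct_i)$ to the reduction, which can then evaluate $\Ver(\vk,k',\sigma')$, and the hybrid goes through; one should of course also check that the $N$-fold composition is harmless, i.e. that $N=\mathrm{poly}(\lambda)$ negligible terms sum to a negligible term.
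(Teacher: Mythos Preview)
Your verifiability argument is correct and essentially identical to the paper's.

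For privacy, you correctly identify the central obstacle: to run the hybrid over the $N$ QPKE instances while keeping the accept bit $e$ in the output, the reduction must be able to compute $\Dec(\sk_i,\ct_i)$ for the challenge index $i$, yet it does not hold $\sk_i$. However, your proposed fix --- strengthening everlasting security so that the experiment also outputs $\Dec(\sk,\ct)$ --- does not work. Consider the identity adversary (no tampering): by correctness of the QPKE scheme, $\Dec(\sk,\ct)=m$ with probability~$1$, so the augmented experiment trivially reveals the message and the two experiments for $m=0$ and $m=1$ are perfectly distinguishable. The flaw in your justification is that the transition $\mathsf{Hyb}_0\to\mathsf{Hyb}_1$ in the proof of \cref{thm:main} (the $\mathsf{Z}$-twirl that replaces $\ket{\Psi}$ by the classical mixture) relies precisely on the fact that $d_0$ does not appear in the experiment's output; once you append $\Dec(\sk,\ct)=d_0\oplus\ct_1\oplus\ct_2\cdot(\sigma_0\oplus\sigma_1)$, that step no longer holds, and your observation that $d_0$ is a fresh uniform bit in $\mathsf{Hyb}_2$ says nothing about the real experiment.

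The paper sidesteps the obstacle rather than confronting it. It runs the hybrid argument only on the adversary's view $\mathsf{QKDSec}^{\adv_\lambda}(1^\lambda)$, \emph{without} the abort bit; this reduction needs no secret keys, since it never computes Alice's decryptions. In the final hybrid every ciphertext encrypts $0$, so the view is independent of $k$ and $k$ retains $4\lambda$ bits of min-entropy given the view. The abort bit is then reinserted after the hybrid via the chain rule of \cref{eq:chain}: conditioning on one additional bit costs at most one bit of average min-entropy, leaving $\tilde H_\infty(k\mid\text{view},\mathsf{abort})\gtrsim 3\lambda$, and \cref{lemma:lhl} with $\ell=\lambda$ finishes. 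The deliberate slack in the parameters (hashing $4\lambda$ bits down to $\lambda$) is exactly what pays for this one-bit loss and makes the chain-rule route possible.
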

\begin{proof}
    We first show that the scheme satisfies privacy. Let us change the syntax of the experiment to explicitly include the view of the adversary the flag $\mathsf{abort}\in\{0,1\}$ that denotes whether the experiment aborted or not in step 5. Then, it suffices to show that the distribution of the keys $K_0$ (as defined step~3 of the experiment) and $K_1$ (as defined in step~5 of the experiment) are statistically close to uniform, conditioned on the variables
    \[
    \left\{ \mathsf{QKDSec}^{\adv_\lambda}(1^\lambda), \mathsf{abort} \right\}.
    \]
    As a first step, we claim that the min-entropy of $k$ is $H_\infty(k) \geq 3\lambda$ in the above view. To show this, we will first consider a modified distribution, where the distinguisher \emph{is not provided} the variable $\mathsf{abort}$. We then define $\mathsf{Hyb}_0^{\adv_\lambda}(1^\lambda)$ to be the output of the original experiment $\mathsf{QKDSec}^{\adv_\lambda}(1^\lambda)$ as defined in \cref{def:qkd_ever}. Then for $i = 1, \dots, 4\lambda + s(4\lambda)$ we define the hybrid $\mathsf{Hyb}_i^{\adv_\lambda}(1^\lambda)$ as follows.
    \begin{itemize}
        \item $\mathsf{Hyb}_i^{\adv_\lambda}(1^\lambda)$: This is defined as the previous hybrid, except that the $i$-th ciphertext $\ct_i$ is computed as
        \[
        \ct_i \gets \Enc({\rho}_i, \pk_i, 0).
        \]
    \end{itemize}
    The statistical indistinguishability of neighbouring outputs follows immediately from the everlasting security of the QPKE scheme, i.e., 
    \[
    \mathsf{Td}\left(\mathsf{Hyb}_{i-1}^{\adv_\lambda}(1^\lambda), \mathsf{Hyb}_{i}^{\adv_\lambda}(1^\lambda)\right) = \negl(\lambda) \text{ for all } i = 1, \dots, 4\lambda + s(4\lambda).
    \]
    Note that in the last hybrid $\mathsf{Hyb}_{4\lambda + s(4\lambda)}^{\adv_\lambda}(1^\lambda)$ the view of the adversary is formally independent from the key $k$ and therefore $k$ has exactly $4\lambda$ bits of entropy. An application of \cref{lemma:lhl} already shows that $K_0$ is statistically close to uniform, since its distribution is independent of the event $\mathsf{abort}$. 
    
    This is however not the case for $K_1$, since whether or not $K_1 = \bot$ depends on the event  $\mathsf{abort}$. Applying the same argument backwards, we lose only a negligible summand in the entropy of $k$, and so we can conclude that $H_\infty(k) \geq 4\lambda - 1$ in the original experiment. By \cref{eq:chain} (chain rule for average-case min entropy), we have that, conditioned on the event $\mathsf{abort}$, it holds that
    \[
    \Tilde{H}_\infty(k | \mathsf{abort}) \geq H_{\infty}(k) - 2 > 3\lambda
    \]
    since $\mathsf{abort}\in\{0,1\}$. By \cref{lemma:lhl}, the statistical distance of $K_1$ from uniform is bounded from above by $\varepsilon = 2^{-\lambda}$ since
    \[
    \Tilde{H}_\infty(k | \mathsf{abort}) - 2\log\left(\frac{1}{\varepsilon}\right) \geq 3\lambda - 2\lambda = \lambda = \ell.
    \]
    This shows that both $K_0$ and $K_1$ are statistically close to uniform, and concludes the proof of everlasting security.

    As for verifiability, let us assume towards contradiction that there exists a QPT adversary that causes a key mismatch $k_0 \neq k_1$ while not causing the decoding algorithm to reject, i.e., $k_1 \neq \bot$. Then it must be the case that the adversary is able to produce a valid message-signature pair $(k_1, \sigma^*)$ under $\vk$, for a message $k_1 \neq k_0$. Since the adversary runs in quantum polynomial time, this contradicts the unforgeability of the OTS scheme and concludes our proof. 
\end{proof}

\section*{Acknowledgments}
The authors would like to thank Khashayar Barooti for many discussion on quantum public key encryption and Takashi Yamakawa for suggesting a proof of~\Cref{thm:unbounded}.
G.M.~was partially funded by the German Federal Ministry of Education and Research (BMBF) in the course of the 6GEM research hub under grant number 16KISK038 and by the Deutsche Forschungsgemeinschaft (DFG, German Research Foundation) under Germany's Excellence Strategy - EXC 2092 CASA – 390781972.
M.W.~acknowledges support by the the European Union (ERC, SYMOPTIC, 101040907), by the Deutsche Forschungsgemeinschaft (DFG, German Research Foundation) under Germany's Excellence Strategy - EXC\ 2092\ CASA - 390781972, by the BMBF through project QuBRA, and by the Dutch Research Council (NWO grant OCENW.KLEIN.267).
Views and opinions expressed are those of the author(s) only and do not necessarily reflect those of the European Union or the European Research Council Executive Agency.
Neither the European Union nor the granting authority can be held responsible for them.
\fi

\bibliographystyle{splncs04}
\bibliography{ref}

\appendix

\section{Impossibility of Unconditionally Secure QPKE}\label{appendix}

In the following we describe a simple argument that rules out the existence of \emph{unconditionally} secure QPKE, even if the adversary is given access to a single copy of the public key. In more details, we show an adversary that can break the security of any QPKE if it is allowed to be unbounded also \emph{during} the protocol execution. The following proof was suggested by Takashi Yamakawa, who should be credited for the argument.
\begin{theorem}[Unconditional Security]\label{thm:unbounded}
    There does not exists an unconditionally secure QPKE.
\end{theorem}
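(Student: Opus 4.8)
The plan is to exploit two features of the QPKE syntax and of the security game: the classical part~$\pk$ of the public key is delivered faithfully (the adversary is only allowed to tamper with the quantum register), and perfect correctness is really a \emph{pointwise} guarantee --- it forces $\Dec(\sk',\cdot)$ to invert $\Enc(\rho',\pk',\cdot)$ for \emph{every} $\sk'$ in the support of $\SKGen(1^\lambda)$, every $(\rho',\pk')$ in the support of $\PKGen(\sk')$, and every ciphertext thereby produced. So an unbounded attacker who receives $(\rho,\pk)$ should simply throw away the state~$\rho$ it was handed and replace it with a fresh public-key state that is consistent with a secret key it generated itself.

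Concretely, I would define the (unbounded) adversary $\adv_\lambda$ as follows. On input $(\rho,\pk)$ it discards~$\rho$ and, using its unbounded power, finds some $\sk'$ in the support of $\SKGen(1^\lambda)$ for which $\pk$ lies in the support of the classical component of $\PKGen(\sk')$; such an $\sk'$ exists because the honestly sampled $\sk$ is one. It then prepares the mixed state $\rho'$ given by $\PKGen(\sk')$ conditioned on its classical output equalling this particular~$\pk$ (exact conditional sampling is fine, since $\adv_\lambda$ need not be efficient), returns~$\rho'$ as the modified public-key register, and keeps~$\sk'$ in its internal register. The challenger then sets $\ct \gets \Enc(\rho',\pk,m)$, so the triple $(\sk',(\rho',\pk),\ct)$ is exactly a transcript that can occur in the correctness experiment conditioned on $\sk=\sk'$ and on the classical key being~$\pk$; hence $\Dec(\sk',\ct)=m$ by perfect correctness.

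To finish, I would observe that both experiment outputs $\mathsf{Exp}^{\adv_\lambda}(1^\lambda,0)$ and $\mathsf{Exp}^{\adv_\lambda}(1^\lambda,1)$ consist of the ciphertext~$\ct$ together with an internal register storing~$\sk'$, and that the channel ``read~$\sk'$, output $\Dec(\sk',\ct)$'' maps them to the constants~$0$ and~$1$ respectively. Monotonicity of the trace distance under quantum channels then gives $\mathsf{Td}\bigl(\mathsf{Exp}^{\adv_\lambda}(1^\lambda,0),\mathsf{Exp}^{\adv_\lambda}(1^\lambda,1)\bigr)=1$, which contradicts unconditional security. I would also note that this attack uses only a single copy of the public key and never touches the classical channel, so it does establish that everlasting security is the best one can hope for.

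The step I expect to require the most care is the second one: checking that conditioning $\PKGen(\sk')$ on the classical output~$\pk$ is well-defined (it is, precisely because $\sk'$ was chosen so that $\pk$ has positive probability) and that substituting the resulting~$\rho'$ into $\Enc(\cdot,\pk,m)$ genuinely reproduces the honest joint distribution of $(\rho',\pk,\ct)$ under secret key~$\sk'$ --- which holds because $\PKGen$ receives only classical input and so leaves behind no hidden purifying register, and because $\Enc$ is applied with the \emph{original}, unmodified~$\pk$. Once that is pinned down, the rest is bookkeeping with perfect correctness and the data-processing inequality for trace distance.
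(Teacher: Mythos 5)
Your proposal is correct and follows essentially the same route as the paper's own proof: an unbounded adversary discards the received state, obtains a secret key $\sk'$ consistent with the delivered classical $\pk$ together with a matching public-key state, substitutes that state, and then decrypts the resulting ciphertext using perfect correctness. The only cosmetic difference is that the paper obtains the consistent pair by rejection sampling ($\SKGen$/$\PKGen$ until the classical part matches $\pk$), whereas you prepare the conditional state directly and spell out the final step via data processing of the trace distance.
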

\begin{proof}
    We provide a description of our generic attacker that, running in unbounded time, wins the experiment defined in~\Cref{def:strong_ever} with certainty. The attacker proceeds as follows.
    \begin{itemize}
        \item On input a state $\rho$ and a bitstring $\pk$, enter the following loop:
        \begin{itemize}
            \item Sample a secret key $\sk^* \gets \SKGen(1^\lambda)$ uniformly.
            \item Run $(\rho^*, \pk^*) \gets \PKGen(\sk)$.
            \item If $\pk^* = \pk$ exit the loop and return $(\rho^*, \pk^*, \sk^*)$.
            \item Else, start over.
        \end{itemize}
        \item Let $(\rho^*, \pk^*, \sk^*)$ be the tuple output by the above loop. Return $\rho^*$ to the challenger.
        \item Upon receiving $\ct$, use $\sk^*$ to decrypt the message.
    \end{itemize}
    To show that the attack always succeeds, it suffices to observe that the internal loop eventually returns a tuple $(\rho^*, \pk^*, \sk^*)$ such that
    \[
    \pk^* = \pk \quad \text{ and } \quad (\rho^*, \underbrace{\pk^*}_{=\pk}) = \PKGen(\sk^*)
    \]
    In particular, this means that the algorithm $\Enc$ run by the challenger is run on a valid pair $(\rho^*, \pk)$. Therefore, by the correctness of QPKE, the secret key $\sk^*$ recovers the correct message with certainty.
\end{proof}

\end{document}